\newtheorem{lemma}{Lemma}
\newtheorem{theorem}{Theorem}
\newtheorem{proposition}{Proposition}
\newtheorem{corollary}{Corollary}
\def\bg{{\boldsymbol{g}}}
\def\bn{{\boldsymbol{n}}}
\def\bw{{\boldsymbol{w}}}
\def\by{{\boldsymbol{y}}}
\def\bG{{\boldsymbol{G}}}
\def\bI{{\boldsymbol{I}}}
\def\bR{{\boldsymbol{R}}}
\begin{document}
\title{Optimal Pilot and Payload Power Control in Single-Cell Massive MIMO Systems}

\author{\IEEEauthorblockN{Hei Victor Cheng, Emil Bj\"ornson, and Erik G. Larsson}

\IEEEauthorblockA{Department of Electrical Engineering (ISY), Link\"oping University, Sweden\\
Email: \{hei.cheng, emil.bjornson, erik.g.larsson\}@liu.se}

\thanks{This work was supported by ELLIIT, the Link\"oping University Center for Industrial Information Technology (CENIIT), and the EU FP7 Massive MIMO for Efficient Transmission (MAMMOET) project.}}

\maketitle

\begin{abstract}
 This paper considers the jointly optimal pilot and data power allocation in single-cell uplink massive multiple-input-multiple-output (MIMO) systems. Using the spectral efficiency (SE) as performance metric and setting a total energy budget per coherence interval, the power control is formulated as optimization problems for two different objective functions: the weighted minimum SE among the users and the weighted sum SE. A closed form solution for the optimal length of the pilot sequence is derived. The optimal power control policy for the former problem is found by solving a simple equation with a single variable. Utilizing the special structure arising from imperfect channel estimation, a convex reformulation is found to solve the latter problem to global optimality in polynomial time. The gain of the optimal joint power control is theoretically justified, and is proved to be large in the low SNR regime. Simulation results also show the advantage of optimizing the power control over both pilot and data power, as compared to the cases of using full power and of only optimizing the data powers as done in previous work.
\end{abstract}

\begin{IEEEkeywords}
massive MIMO, power control, power allocation, convex optimization
\end{IEEEkeywords}
\section{Introduction}
\subsection{Background and Motivation}
Massive MIMO communication systems have recently attracted a lot of
attention\cite{RPLLMET2013,TM2010,LLSAZ2014}. The idea of massive MIMO
is to use a large amount of antennas at the base station (BS) to serve
multiple users in the same time and frequency resource block. The
ability to increase both SE and energy efficiency makes it one of the
key technologies for the 5G cellular networks. The performance
analysis of massive MIMO is of vast importance and has been done
in \cite{NLM2013,YM2013a} for single-cell systems and
in \cite{YM2013,HBD2013} for multi-cell systems. However the analysis
has been done with the assumption of equal or arbitrary fixed power
allocation among the users. Several previous papers
  \cite{YM2014,GFGFA2014,NML2014,C2014,CBL2015,LBL2015,GGA2015,ZZ2016,LW2015}
  have dealt with power control and provided initial results. (For
  relation to our work, see below.) In order to harvest all the
benefits brought by the massive antenna arrays and guarantee certain
uplink system performance, power control among the users is
necessary. This can be done by varying the power of different users to
increase the sum SE, provide services with certain fairness, or
balance between these goals.

Power control in wireless networks has been an important problem for decades, dating back to single-antenna wireless systems. Due to the interference from other users the power control is usually hard to solve optimally, in particular NP-hardness was proven in \cite{LZ2008} for the objective of maximizing the sum performance in single-antenna wireless networks, even with single-carrier transmission. For practical use a reasonable approach is to develop suboptimal algorithms with affordable complexity while achieving an acceptable performance, as done for example in \cite{CTPNJ2007}.

Compared to power control in single-antenna systems, power control in massive MIMO networks is a relatively new topic. Accurate channel estimates are needed at the BS for carrying out coherent linear processing, e.g. uplink detection and downlink precoding. Due to the large number of antennas in massive MIMO the instantaneous channel knowledge, which is commonly assumed to be known perfectly in the power control literature, is hard to obtain perfectly. The literature on power control for multi-user MIMO, and even jointly with optimal beamformer design, see for example\cite{SSB2008,BJ2013} and the references therein, did not consider the channel estimation error explicitly and the design criterion was based on SE. We want to provide power control schemes that optimize the ergodic SE based on only the large-scale fading to simplify system design, and take into account the channel estimation errors. Therefore in this work we develop a new framework for power control that matches practical systems (i.e., ergodic SE and imperfect CSI), as the methods developed in the literature cannot be applied directly for massive MIMO systems.

\subsection{Related Work and Our Contributions}
Uplink pilots are used to estimate the uplink channels. One needs to take into account both the pilot power and payload power, and hence optimal power control becomes even harder in massive MIMO compared to optimizing data power only in the single-antenna systems. Several work has tried to tackle this challenging problem. In \cite{YM2014} the authors optimize the data power for providing uniform service in multi-cell massive MIMO systems. In \cite{NML2014} and \cite{LW2015} the authors optimize the ratio between pilot and data power to maximize the sum SE, however each user is assumed to use the same ratio. In \cite{C2014} the sum data power is minimized subject to target signal to interference plus noise ratio (SINR) constraints for multi-cell massive MIMO systems. In \cite{GFGFA2014} power control is done to minimize the uplink power consumption under target SINR constraints where the authors optimize the pilot and data power iteratively to achieve local optima. In \cite{LBL2015} data power control is done to maximize various objectives in multi-cell massive MIMO systems with an iterative approach which only achieve local optimal solutions. In \cite{GGA2015} joint pilot and data power control is done to maximize the energy efficiency. The optimization is done with an approximation of the interference term which therefore does not give the optimal solution. In the conference version of this work\cite{CBL2015} we provided a GP formulation for joint pilot and data power control in single cell massive MIMO systems with MRC. Then it was re-derived in \cite{ZZ2016} to minimize the total power consumption while meeting target uplink and downlink SINR for the users.
We are not aware of any work except \cite{CBL2015} and \cite{ZZ2016} that find the jointly global optimal pilot and payload data power for massive MIMO. The previous work either focus on power minimization with SINR constraints or only achieve local optima. The preset target SINR constraints are hard to obtain in practice and the local optima does not provide the complete information about how much can we gain by power control. In this work we address this by providing globally optimal joint power for various objectives, and the questions we want to answer are:
\begin{enumerate}
\item Is power control on the pilots needed for massive MIMO systems? If the answer is yes, how much can we gain from jointly optimizing the pilot power and data power, as compared to always using equal power allocation or just power control over the data power?
\item In which scenarios can we gain the most from joint optimization?
\item What intuition can be obtained from the optimal power control? This includes the pilot length, and how the pilot and payload power depend on the estimation quality and signal to noise ratio (SNR).
\end{enumerate}
In this paper we provide answers to these questions in the single-cell uplink scenario with linear operation including maximum ratio combining (MRC) and zero-forcing (ZF). The single-cell scenario is considered here to gain some initial insights to the problem and the challenging extension to multi-cell is left for future work. Note that there are important scenarios when single-cell massive MIMO systems can be deployed, e.g. stadiums and rural wireless broadband access. We formulate and solve the optimization problems and compare the results with simple heuristic power control policies. Two commonly used performance objectives, namely weighted max-min SE and weighted sum SE optimization, are investigated. Our contributions are the following:
\begin{enumerate}
\item For the weighted max-min SE formulation, a semi-closed form solution is obtained by solving a simple equation with a single variable.
\item For the weighted sum SE formulation, which was proved to be NP-hard in general wireless networks, is transformed into a convex form in the massive MIMO setup where efficient polynomial time algorithms can be applied to find the global optimum.
\item Both theoretical and numerical results are presented to show the gains of the new framework for joint pilot and data power control.
\end{enumerate}
The existing literature on power control is summarized in Table \ref{table}, the ones marked with ${*}$ are the contributions in this work.

\begin{table*}[t]
\caption{\label{table}Existing Methods for Power Control Problem}
\begin{center}
    \begin{tabular}{ | l | l | l | p{4cm} |}
    \hline
    Problems & Massive MIMO (data) & Massive MIMO (data+pilot) & Multiuser MIMO (perfect CSI) \\ \hline
    Max-min (MRC) & closed form \cite{YM2014} & {*}semi-closed form (Theorem \ref{snr}) & convex \cite{BJ2013} \\ \hline
    Max-min (ZF) & closed form \cite{YM2014} & {*}semi-closed form (Theorem \ref{snr})& full power\\ \hline
    Sum (MRC) & {*}virtual water-filling (Algorithm \ref{vwf}) & {*}convex (Theorem \ref{summrc})& NP-hard \cite{LZ2008}\\ \hline
    Sum (ZF) & {*}virtual water-filling (Algorithm \ref{vwf}) &{*}convex (Corollary \ref{sumzf})& full power \\    \hline
    \end{tabular}
\end{center}
\end{table*}

The rest of the paper is organized as follows. Section \ref{model}
presents the system model and all the necessary notations. Lower
bounds on the uplink capacity are presented which are used to define
the problem formulations for optimal power control. In Section
\ref{training} we obtain the optimal pilot length for both problem
formulations. Section \ref{maxminproblem} derives the solution
approaches for solving the power control problems with weighted
max-min SE. In Section \ref{sumproblem} the weighted sum SE
formulation is studied. Section \ref{correlated}
  discusses the extension of the methodology developed in this paper
  to correlated channel fading models. In Section \ref{simulation}
simulation results and discussion of the results are
presented. Finally in Section \ref{conclusion} we draw some
conclusions.

\section{System Model}\label{model}
Consider an uplink single-cell massive MIMO systems with $M$ antennas at the BS and $K$ single-antenna users. The $K$ users are assigned $K$ orthogonal pilot sequences of length $\tau_p$ for $K \leq \tau_p \leq T$, where $T$ is the number of symbols in the coherence interval in which the channels are assumed to be constant. The channels are modeled to be independent Rayleigh fading as this matches the non-line-of-sight massive MIMO channel measurement results reported in\cite{GERT2015}. The flat fading channel matrix between the BS and the users is denoted by $\bG\in \mathbb{C}^{M \times K}$, where the $k^{th}$ column represents the channel response to user $k$ and has the distribution
\begin{equation}
\bg_{k}\sim CN(\mathbf{0},\beta_k\mathbf{I}), ~k=1,2,\ldots, K,
\end{equation}
which is a circularly symmetric complex Gaussian random vector. The variance $\beta_k>0$ represents the large-scale fading including path loss and shadowing, and is normalized by the noise variance at the BS to simplify the notation. The large-scale fading coefficients are assumed to be known at the BS as they are varying slowly (in the scale of thousands of coherence intervals) and can be easily estimated. The power control proposed in this work only depends on the large-scale fading which makes it feasible to optimize the power control online.

In each coherence interval, user $k$ transmits its orthogonal pilot sequence with power $p_p^k$ to enable channel estimation at the BS. We assume that minimum mean-squared error (MMSE) channel estimation is carried out at the BS to obtain the small-scale coefficients. This gives an MMSE estimate of the channel vector from user $k$ as
\begin{equation}
\hat{\bg}_k=\frac{\sqrt{\tau_p p_p^k \beta_k}}{1+\tau_p p_p^k \beta_k}\left(\sqrt{\tau_p p_p^k}\bg_k+\bn_p^k\right)
\end{equation}
where $\bn_p^k\sim CN(\bf{0},\bI)$ accounts for the additive noise during the training interval. During the payload data transmission interval, the BS receive the signal
\begin{equation}
\by=\sum_{k=1}^K \bg_k \sqrt{p_d^k}s_k+\bn
\end{equation}
where $s_k$ is the zero mean and unit variance Gaussian information symbol from user $k$ and $\bn\sim CN(\bf{0},\bI)$ represents the noise during the data transmission. The channel estimates are used for MRC or ZF detection of the payload, which corresponds to multiplying the received signal $\by$ with $\hat{\bG}^H\triangleq[\hat{\bg}_1,\ldots,\hat{\bg}_K]^H$ or $(\hat{\bG}^H \hat{\bG})^{-1}\hat{\bG}^H$ to detect the symbols $s_1,\dots, s_K$. The power control methodologies derived in this paper can be applied jointly to each subcarrier in an orthogonal frequency division multiplexing (OFDM) systems. With the channel hardening effect offered by massive MIMO, channel variations in different subcarriers can be neglected and the SE in every subcarrier will mainly depend on the large-scale fading. Therefore the whole spectrum can be allocated to every user and the same power control can be applied to all subcarriers. To make a fair comparison with the scheme with equal power allocation in which each user gives the same power to pilot and data, as done in \cite{NLM2013} and most other previous work, we impose the following constraint on the total transmit energy over a coherence interval:
\begin{equation}\label{power}
\tau_p p_{p}^k+(T-\tau_p) p_d^k \leq E_{k},~k=1,\ldots,K
\end{equation}
where $E_{k}$ is the total energy budget for user $k$ within one coherence interval. In previous work, $p_{p}^k$ and $p_{d}^k$ have been optimized separately or often not optimized at all in which case the massive MIMO ability to provide high SE for each user cannot be fully harvested. Therefore we consider the scenario where each user can choose freely how to allocate its energy budget on the pilots and payload. In \cite{NML2014,HBD2013} $p_p^k$ and $p_d^k$ are set equal for every user. The work \cite{YM2014,C2014,LBL2015} optimized the payload power to maximize the minimum throughput, which corresponds to fixing $p_p^k$ for every user and optimizing only over $p_d^k$. The work \cite{BLD2015} adopted inverse power control for the pilot power, which corresponds to setting $p_p^k=C/\beta_k$ with a normalization constant $C$ and the data power $p_d^k$ are set to be equal for all users. These previous work can all be included in our framework by setting different variables to be constant. Therefore our framework of power control is the most general so far.

\subsection{Achievable SE With Linear Detection}\label{rate}
Since the exact ergodic capacity of the uplink multiuser channels with channel uncertainty is unknown, lower bounds on the achievable SE are often adopted as the performance metric in the massive MIMO literature. Here we present lower bounds on the capacity for arbitrary power control. The achievable SE for user $k$ using MRC is given by the following lemma.

\begin{lemma}\label{mrcse}
The capacity of user $k$ with MRC detection is lower bounded by the achievable ergodic SE
\begin{equation}
R_k=\left(1-\frac{\tau_p}{T}\right)\log_2(1+\mathrm{SINR}_k)
\end{equation}
where pilot and payload powers are arbitrary,
\begin{equation}\label{sinrmrc}
\mathrm{SINR}_k=\frac{Mp_d^k \gamma_k }{1+\sum_{j= 1}^K \beta_j p_d^j}
\end{equation}
and $\gamma_k=\frac{\tau_p p_p^k \beta_k^2}{1+\tau_p p_p^k \beta_k}$.
\end{lemma}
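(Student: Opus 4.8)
The plan is to follow the standard "use-and-then-forget" bounding technique from Massive MIMO capacity analysis (as in \cite{NLM2013,YM2013a}), adapted here to arbitrary pilot and payload powers. First I would write the MMSE estimate as $\hat{\bg}_k = \bg_k - \tilde{\bg}_k$, where $\tilde{\bg}_k$ is the estimation error. Because MMSE estimation is used with Gaussian channels, $\hat{\bg}_k$ and $\tilde{\bg}_k$ are independent, with $\hat{\bg}_k \sim \CN(\bzero, \gamma_k \bI)$ and $\tilde{\bg}_k \sim \CN(\bzero, (\beta_k-\gamma_k)\bI)$, where $\gamma_k = \frac{\tau_p p_p^k \beta_k^2}{1+\tau_p p_p^k \beta_k}$ follows directly from the given expression for $\hat{\bg}_k$ by computing the variance of its entries. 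This step is routine but must be done carefully to get the $\gamma_k$ formula.

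Next I would form the MRC-combined signal $\hat{\bg}_k\H \by$ for user $k$ and split it into the desired term $\sqrt{p_d^k}\,\E\{\hat{\bg}_k\H\bg_k\}\, s_k$ plus an effective noise consisting of: (i) the self-interference deviation $\sqrt{p_d^k}(\hat{\bg}_k\H\bg_k - \E\{\hat{\bg}_k\H\bg_k\})s_k$, (ii) the inter-user interference $\sum_{j\neq k}\sqrt{p_d^j}\,\hat{\bg}_k\H\bg_j s_j$, and (iii) the noise term $\hat{\bg}_k\H\bn$. Treating the effective noise as worst-case Gaussian (uncorrelated with the desired signal, which holds here by the independence and zero-mean properties), the resulting achievable rate is $R_k = (1-\tau_p/T)\log_2(1+\mathrm{SINR}_k)$ with $\mathrm{SINR}_k$ equal to the desired signal power over the sum of the three effective-noise powers. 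The prefactor $(1-\tau_p/T)$ accounts for the fraction of the coherence interval spent on payload.

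The remaining work is computing four expectations. For the numerator, $\E\{\hat{\bg}_k\H\bg_k\} = \E\{\|\hat{\bg}_k\|^2\} = M\gamma_k$, giving desired power $p_d^k M^2\gamma_k^2$. For (i), using $\bg_k=\hat{\bg}_k+\tilde{\bg}_k$ and independence, $\mathrm{Var}\{\hat{\bg}_k\H\bg_k\} = \mathrm{Var}\{\|\hat{\bg}_k\|^2\} + \E\{|\hat{\bg}_k\H\tilde{\bg}_k|^2\} = M\gamma_k^2 + M\gamma_k(\beta_k-\gamma_k)$. For (ii), since $\bg_j$ is independent of $\hat{\bg}_k$ for $j\neq k$, $\E\{|\hat{\bg}_k\H\bg_j|^2\} = M\gamma_k\beta_j$. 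For (iii), $\E\{\|\hat{\bg}_k\|^2\} = M\gamma_k$. Summing the effective-noise powers yields $p_d^k M\gamma_k\beta_k + \sum_{j\neq k}p_d^j M\gamma_k\beta_j + M\gamma_k = M\gamma_k(1+\sum_{j=1}^K p_d^j\beta_j)$. Dividing, the common factor $M\gamma_k$ cancels once, leaving $\mathrm{SINR}_k = \frac{M p_d^k\gamma_k}{1+\sum_{j=1}^K\beta_j p_d^j}$, as claimed.

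I expect the only real subtlety to be justifying that the effective noise can legitimately be treated as worst-case independent Gaussian noise for the purpose of a capacity lower bound — this is the standard argument invoking the fact that Gaussian noise is the worst uncorrelated additive noise, combined with checking that the cross terms between the desired signal and each effective-noise component genuinely have zero mean (which follows from the independence of $s_k$ across users and the independence of $\hat{\bg}_k$ from $\{\bg_j\}_{j\neq k}$ and from $\bn$). Everything else is a bookkeeping exercise in fourth-moment computations for complex Gaussian vectors.
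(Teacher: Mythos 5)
Your proposal is correct and follows essentially the same route as the paper, which simply invokes the standard use-and-then-forget bounding technique of \cite{YM2013} (the same bound the paper restates explicitly in Appendix C via \cite{JAMV2011}) with user-specific pilot powers inserted; your moment computations ($p_d^k M^2\gamma_k^2$ desired power and $M\gamma_k(1+\sum_j\beta_j p_d^j)$ effective-noise power) are exactly the bookkeeping that derivation requires and they check out. The only cosmetic remark is that the paper's displayed expression for $\hat{\bg}_k$ contains a typo (the prelog factor should be $\sqrt{\tau_p p_p^k}\,\beta_k/(1+\tau_p p_p^k\beta_k)$), and you implicitly used the correct MMSE estimator, which is what yields $\gamma_k=\tau_p p_p^k\beta_k^2/(1+\tau_p p_p^k\beta_k)$.
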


For ZF, an achievable ergodic SE of user $k$ is given by the following lemma.
\begin{lemma}\label{zfse}
The capacity of user $k$ with ZF detection is lower bounded by the achievable ergodic SE
\begin{equation}
R_k=\left(1-\frac{\tau_p}{T}\right)\log_2(1+\mathrm{SINR}_k)
\end{equation}
where pilot and payload powers are arbitrary,
\begin{equation}\label{sinrzf}
\mathrm{SINR}_k=\frac{(M-K)p_d^k \gamma_k}{1+\sum_{j=1}^K  p_d^j(\beta_j-\gamma_j)}
\end{equation}
and $\gamma_j=\frac{\tau_p p_p^j \beta_j^2}{1+\tau_p p_p^j \beta_j}$. $M>K$ needs to be satisfied for ZF detector to work.
\end{lemma}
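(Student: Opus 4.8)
The plan is to lower-bound the ergodic capacity of user $k$ by the customary ``condition on the channel estimate'' argument, exploiting the fact that, unlike for MRC, the ZF combiner produces a \emph{deterministic} desired-signal coefficient. First I would record the second-order statistics behind~(2): since $\bg_k$ and the pilot observation are jointly Gaussian, the orthogonality principle of MMSE estimation implies that $\hat{\bg}_k\sim\mathcal{CN}(\bzero,\gamma_k\bI)$ and the estimation error $\tilde{\bg}_k\triangleq\bg_k-\hat{\bg}_k\sim\mathcal{CN}(\bzero,(\beta_k-\gamma_k)\bI)$ are independent; moreover, because the $K$ pilot sequences are orthogonal and the pilot noise is independent across users, the columns $\hat{\bg}_1,\dots,\hat{\bg}_K$ of $\hat{\bG}$ are mutually independent. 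Substituting $\bg_k=\hat{\bg}_k+\tilde{\bg}_k$ in~(3) and letting $\ba_k\H$ denote the $k$-th row of $(\hat{\bG}\H\hat{\bG})^{-1}\hat{\bG}\H$, which satisfies $\ba_k\H\hat{\bg}_j=\delta_{kj}$, the ZF combiner output for detecting $s_k$ is
\begin{equation}
\ba_k\H\by=\sqrt{p_d^k}\,s_k+\sum_{j=1}^{K}\sqrt{p_d^j}\,\big(\ba_k\H\tilde{\bg}_j\big)s_j+\ba_k\H\bn .
\end{equation}

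Next I would condition on $\hat{\bG}$. Since $\tilde{\bg}_1,\dots,\tilde{\bg}_K$ and $\bn$ are independent of $\hat{\bG}$, conditioning leaves their distributions unchanged, while the desired term keeps the deterministic coefficient $\sqrt{p_d^k}$. The effective noise $w_k\triangleq\sum_{j}\sqrt{p_d^j}(\ba_k\H\tilde{\bg}_j)s_j+\ba_k\H\bn$ is uncorrelated with $s_k$: the terms with $j\neq k$ vanish because $s_j$ is independent of $s_k$, and the $j=k$ term has conditional mean $\sqrt{p_d^k}\,\ba_k\H\,\mathrm{E}[\tilde{\bg}_k]=0$, using that $\ba_k$ is a function of $\hat{\bG}$ alone. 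Its conditional variance equals $\|\ba_k\|^2\big(1+\sum_{j=1}^{K}p_d^j(\beta_j-\gamma_j)\big)$ by $\mathrm{E}[\tilde{\bg}_j\tilde{\bg}_j\H]=(\beta_j-\gamma_j)\bI$ and $\mathrm{E}[|s_j|^2]=1$. The standard worst-case-noise bound --- for a Gaussian input, replacing uncorrelated additive noise by Gaussian noise of the same variance cannot increase the mutual information --- then gives $I(s_k;\ba_k\H\by\mid\hat{\bG})\ge\log_2\!\big(1+p_d^k\big/\big(\|\ba_k\|^2(1+\sum_j p_d^j(\beta_j-\gamma_j))\big)\big)$. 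Averaging over $\hat{\bG}$ and accounting for the $\tau_p$ pilot symbols out of $T$ in each coherence interval yields the pre-log factor $1-\tau_p/T$, so $R_k=(1-\tau_p/T)\,\mathrm{E}_{\hat{\bG}}\!\big[\log_2(1+p_d^k/(\|\ba_k\|^2(1+\sum_j p_d^j(\beta_j-\gamma_j))))\big]$. (The same expression also follows from the use-and-forget bound, since here $\mathrm{E}[\ba_k\H\bg_k]=1$ is deterministic.)

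It remains to discharge the expectation over $\hat{\bG}$ in closed form. Because $x\mapsto\log_2(1+c/x)$ is convex on $(0,\infty)$ for each $c>0$, Jensen's inequality yields $\mathrm{E}_{\hat{\bG}}[\log_2(1+c/\|\ba_k\|^2)]\ge\log_2(1+c/\mathrm{E}[\|\ba_k\|^2])$, so the proof reduces to showing $\mathrm{E}[\|\ba_k\|^2]=\tfrac{1}{(M-K)\gamma_k}$. Here $\|\ba_k\|^2=\big[(\hat{\bG}\H\hat{\bG})^{-1}\big]_{kk}$; writing $\hat{\bG}=\bW\,\mathrm{diag}(\gamma_1,\dots,\gamma_K)^{1/2}$ with $\bW$ an $M\times K$ matrix of i.i.d.\ $\mathcal{CN}(0,1)$ entries gives $\|\ba_k\|^2=\gamma_k^{-1}\big[(\bW\H\bW)^{-1}\big]_{kk}$, where $\bW\H\bW$ is a $K\times K$ central complex Wishart matrix with $M$ degrees of freedom. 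Invoking the classical inverse-Wishart mean $\mathrm{E}[(\bW\H\bW)^{-1}]=\tfrac{1}{M-K}\bI$, valid exactly when $M>K$, gives $\mathrm{E}[\|\ba_k\|^2]=\tfrac{1}{(M-K)\gamma_k}$, and substituting $c=p_d^k/(1+\sum_j p_d^j(\beta_j-\gamma_j))$ turns $\log_2(1+c/\mathrm{E}[\|\ba_k\|^2])$ into $\log_2(1+\mathrm{SINR}_k)$ with $\mathrm{SINR}_k$ exactly as in~\eqref{sinrzf}. I expect the only genuinely non-routine ingredient to be this inverse-Wishart moment --- equivalently, the $\mathrm{E}[\,1/\|\ba_k\|^2\,]$ scaling of the ZF array gain --- while everything else is bookkeeping; note that $M>K$ is precisely the condition needed to keep that expectation finite, which is why the lemma requires it.
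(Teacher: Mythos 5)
Your proof is correct and follows essentially the same route as the paper, which does not spell out the argument but defers to the standard derivation in \cite{YM2013} (with per-user pilot powers inserted): decompose $\bg_k$ into the MMSE estimate plus an independent error, condition on $\hat{\bG}$, treat the residual interference plus noise as worst-case uncorrelated Gaussian noise, and close the bound via Jensen's inequality together with the inverse-Wishart mean $\E[(\bW\H\bW)^{-1}]=\frac{1}{M-K}\bI$, which is exactly where the requirement $M>K$ enters. No gaps; the only cosmetic remark is that the use-and-forget bound mentioned in your parenthesis yields the final closed-form SINR directly rather than the intermediate expectation-of-log expression, but it leads to the same result.
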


\color{black}
The proofs of Lemmas \ref{mrcse} and \ref{zfse} can be obtained by
adding corresponding indices for different users' pilot power in the proofs of \cite{YM2013}. Note that these achievable rates are valid for any number of antennas at the BS. However, they are only close to the capacity when there is substantial channel hardening, which is the case when $M$ is large, i.e. in the massive MIMO regime.

These achievable SEs are the performance metric commonly used in the massive MIMO literature. Therefore it is used throughout the paper, where $\tau_p$, $p_p^k$ and $p_d^k$ are the variables to be optimized (for $k=1,\ldots,K$). The optimization can be done at the BS, which can then inform the users about the pilot length, the amount of power to be spent on pilots, and the amount of power to be spent on payload data. The aim is to maximize a given utility function $U(R_1,\ldots,R_K)$ where $U(\cdot)$ can be any function that is monotonically increasing in every argument. The utility function characterizes the performance and fairness that we provide to the users. Examples of commonly used utility functions are the max-min fairness, sum performance, and proportional fairness\cite{BJ2013}. The general problem we address for both MRC and ZF is:
\textbf{\begin{equation}\label{opt}
\begin{aligned}
& ~\underset{\tau_p,\{p_p^k\},\{p_d^k\}}{\text{maximize}}
& & U ~(R_1,\ldots,R_K)   \\
&~\text{subject to}
& & \tau_p p_{p}^k+(T-\tau_p) p_d^k \leq E_{k}, ~\forall k, \\
& & & p_p^k\geq 0, p_d^k\geq0,~ \forall k, \\
& & & K\leq\tau_p\leq T.
\end{aligned}
\end{equation}}
\section{Optimal Pilot Length}\label{training}
In this section we derive the optimal length of the pilot sequences in \eqref{opt} in closed form. First we provide the following lemma:
\begin{lemma}\label{lemma1}
For any monotonically increasing utility function with MRC or ZF detection, the energy constraint \eqref{power} is satisfied with equality for every user at the optimal solution, i.e.,
\begin{equation}
\tau_p p_{p}^k+(T-\tau_p) p_d^k = E_{k}, ~k=1,\ldots,K
\end{equation}
at the optimal point of \eqref{opt}.
\end{lemma}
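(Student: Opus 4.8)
The plan is to argue by contradiction (equivalently, to exhibit a feasible point that strictly improves, or at worst matches, any candidate violating the claim). Suppose $(\tau_p,\{p_p^k\},\{p_d^k\})$ is optimal for \eqref{opt} but for some user $\ell$ the energy constraint is slack, i.e. $\tau_p p_p^\ell+(T-\tau_p)p_d^\ell<E_\ell$. Since $\tau_p\ge K\ge 1>0$, there is an $\epsilon>0$ such that increasing $p_p^\ell$ to $p_p^\ell+\epsilon$ while leaving $\tau_p$ and all other powers fixed remains feasible: only the $\ell$-th energy constraint is affected and it still holds, and all the nonnegativity and $K\le\tau_p\le T$ constraints are untouched.

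The substance of the proof is to track how this perturbation propagates through the SINR formulas. A one-line computation shows $\gamma_\ell=\tau_p p_p^\ell\beta_\ell^2/(1+\tau_p p_p^\ell\beta_\ell)$ is strictly increasing in $p_p^\ell$ (its derivative is $\tau_p\beta_\ell^2/(1+\tau_p p_p^\ell\beta_\ell)^2>0$), while $\gamma_j$ for $j\ne\ell$ is unchanged. For MRC, \eqref{sinrmrc} shows $\gamma_\ell$ enters only the numerator of $\mathrm{SINR}_\ell$ and appears in no other user's SINR, so $\mathrm{SINR}_\ell$ strictly increases and $\mathrm{SINR}_k$ is unchanged for every $k\ne\ell$. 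For ZF, \eqref{sinrzf} shows $\gamma_\ell$ enters the numerator of $\mathrm{SINR}_\ell$ and also the common denominator $1+\sum_j p_d^j(\beta_j-\gamma_j)$ of every $\mathrm{SINR}_k$ through the term $-p_d^\ell\gamma_\ell$; hence that denominator does not increase and every $\mathrm{SINR}_k$ is nondecreasing, with $\mathrm{SINR}_\ell$ strictly increasing. In both cases no $R_k$ decreases and $R_\ell$ strictly increases, so by monotonicity of $U$ the objective does not decrease, and strictly increases whenever $U$ is strictly increasing in its $\ell$-th argument, contradicting optimality; hence the $\ell$-th constraint is tight.

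The only point requiring care — and the main obstacle — is the degenerate case $p_d^\ell=0$, since then $R_\ell=0$ regardless of $p_p^\ell$ and the perturbation gains nothing. For utilities that are pushed to $0$ or $-\infty$ by a zero rate (e.g. weighted max–min or proportional fairness) this case cannot occur at an optimum, so $p_d^\ell>0$ automatically. For the weighted sum one instead simply raises $p_p^\ell$ to $E_\ell/\tau_p$: every SINR in \eqref{sinrmrc}, \eqref{sinrzf} multiplies $\gamma_\ell$ by $p_d^\ell=0$, so no $R_k$ changes, and the resulting point is still optimal but now has the $\ell$-th constraint tight. Thus the precise reading is that there is an optimal solution of \eqref{opt} at which \eqref{power} holds with equality for every $k$ (any silent user's leftover budget being dumped onto its pilot). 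Beyond this bookkeeping, the proof is just the sign-tracking of $\gamma_\ell$ in the SINR expressions together with monotonicity of $U$, so I do not expect any serious technical difficulty.
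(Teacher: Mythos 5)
Your proof is correct and follows essentially the same route as the paper's: perturb any user with a slack energy budget by raising its pilot power, and use monotonicity of its own SINR in $p_p^k$ (and, for ZF, the non-decrease of the other SINRs) together with monotonicity of $U$ to contradict optimality. You are in fact slightly more careful than the paper, which claims the SINRs are independent of other users' pilot powers (true only for MRC, not ZF) and glosses over the degenerate $p_d^\ell=0$ case that you resolve via the ``there exists an optimal point with equality'' reading.
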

\begin{proof}
We prove this by contradiction. The SINRs in \eqref{sinrmrc} and \eqref{sinrzf} for MRC and ZF are monotonically increasing in $p_p^k$ for every user $k$, and independent of the other users' pilot powers. Suppose some users do not use the full energy budget in the optimal power allocation, they can each increase their pilot power to improve their own SINR without lowering any other user's SINR. Therefore we create a solution which is better than or equal to the optimal one, which is a contradiction to our assumption. Therefore the energy constraint is satisfied with equality.
\end{proof}
Then we state the following theorem which gives the optimal length of training interval in closed form.
\begin{theorem}\label{theorem1}
For any monotonically increasing utility function $U(R_1,\ldots,R_k)$, the problem \eqref{opt} has $\tau_p=K$ at the optimal solution.
\end{theorem}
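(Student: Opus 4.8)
The plan is to show that any feasible point of \eqref{opt} can be modified --- without decreasing a single $R_k$ --- into one that uses $\tau_p = K$; monotonicity of $U(\cdot)$ then forces $\tau_p = K$ to be optimal. The device that makes this work is to stop parametrizing by the powers $p_p^k,p_d^k$ and instead use the pilot and payload \emph{energies} $a_k \triangleq \tau_p p_p^k$ and $b_k \triangleq (T-\tau_p)p_d^k$. By \eqref{power} (indeed with equality, by Lemma \ref{lemma1}) these obey $a_k+b_k\le E_k$, and the crucial observation is that $\gamma_k = \frac{a_k\beta_k^2}{1+a_k\beta_k}$ depends on $\tau_p$ and $p_p^k$ only through the product $a_k$.

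First I would rewrite the SINRs in these variables. Writing $c\triangleq T-\tau_p>0$ for the number of payload symbols and substituting $p_d^k=b_k/c$, \eqref{sinrmrc} becomes $\mathrm{SINR}_k=\frac{M b_k\gamma_k}{c+\sum_{j=1}^K\beta_j b_j}$; for ZF, using $\beta_j-\gamma_j=\frac{\beta_j}{1+a_j\beta_j}$ --- again positive and a function of $a_j$ alone --- \eqref{sinrzf} becomes $\mathrm{SINR}_k=\frac{(M-K)b_k\gamma_k}{c+\sum_{j=1}^K b_j(\beta_j-\gamma_j)}$. Hence, once the energy split $\{a_j,b_j\}$ is held fixed, in both cases
\[
R_k = \frac{c}{T}\log_2\!\Big(1+\frac{N_k}{c+S}\Big),
\]
with constants $N_k,S>0$ that do not depend on $c$.

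The crux is then a scalar monotonicity check: I claim $c\mapsto R_k$ is nondecreasing on $(0,T-K]$. Differentiating, $\partial R_k/\partial c\ge 0$ is equivalent to $\ln\!\big(1+\frac{N_k}{c+S}\big)\ge\frac{c}{c+S}\cdot\frac{N_k/(c+S)}{1+N_k/(c+S)}$; since $\frac{c}{c+S}\le 1$, the right-hand side is at most $\frac{x}{1+x}$ with $x\triangleq N_k/(c+S)>0$, and the elementary bound $\ln(1+x)\ge\frac{x}{1+x}$ for $x>0$ finishes it. Consequently, increasing $c$ (i.e.\ decreasing $\tau_p$) while keeping every $(a_k,b_k)$ frozen preserves feasibility --- $a_k,b_k\ge 0$ and $a_k+b_k\le E_k$ are untouched, and $p_p^k=a_k/\tau_p$, $p_d^k=b_k/c$ stay nonnegative --- and does not lower any $R_k$. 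Pushing $\tau_p$ down to its smallest feasible value $K$ therefore cannot decrease $U(R_1,\dots,R_K)$, which establishes the theorem.

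I expect the only real obstacle to be spotting the right invariant. The naive moves all fail: holding the powers fixed while shrinking $\tau_p$ inflates the payload energy and violates \eqref{power}, while holding $\gamma_k$ fixed drags the data powers (hence the SINRs) down. Freezing the energy pair $(a_k,b_k)$ is precisely what simultaneously keeps the budget satisfied, leaves $\gamma_k$ untouched, and reduces the $\tau_p$-dependence of $R_k$ to the one-parameter form above --- after which the whole argument rests on $\ln(1+x)\ge x/(1+x)$.
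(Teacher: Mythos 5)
Your proposal is correct and is essentially the paper's own argument: you freeze the per-user pilot and payload energies $\tau_p p_p^k$ and $(T-\tau_p)p_d^k$ (the paper's explicit construction $p_p^{k\prime}=\tau_p^* p_p^{k*}/K$, $p_d^{k\prime}=(E_k-\tau_p^* p_p^{k*})/(T-K)$ does exactly this, leaving $\gamma_k$ unchanged), reduce $R_k$ to the single-variable form $\frac{x}{T}\log_2\bigl(1+\frac{a}{x+c}\bigr)$ in $x=T-\tau_p$, and prove its monotonicity via the same elementary bound $\ln(1+x)\ge \frac{x}{1+x}$ that the paper isolates as Lemmas \ref{lemma2} and \ref{lemma3}. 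The only cosmetic difference is that you argue by direct improvement to $\tau_p=K$ rather than by contradiction, which changes nothing of substance.
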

\begin{proof}
The proof can be found in Appendix \ref{proofpilot}.
\end{proof}
Using Theorem \ref{theorem1}, we can reduce the number of variables involved in \eqref{opt} and this enables us to find the optimal solutions for certain utility functions in the following sections. Also from Theorem \ref{theorem1} we know that the optimal training period $\tau_p$ is equal to the number of users being served, and is the same for every user. Therefore there is no need for assigning pilot sequences of different lengths to different users.
\section{Joint Power Control of Pilots and Payload to Maximize Weighted Minimum SE}\label{maxminproblem}

In this section we solve the power control problem \eqref{opt} for the
class of max-min fairness problem. The max-min fairness problem is selected to provide the same quality-of-service to all users in the cell. The two cases with MRC and ZF will be discussed separately since the SINR expressions are different. With
max-min fairness we aim at serving every user with equal weighted SE
according to their priorities and make this value as large as
possible. We choose $U(\tilde{R}_1,\ldots,\tilde{R}_K)=\min_k
\tilde{R}_k$ with
$\tilde{R}_k=(1-\frac{\tau_p}{T})\log_2(1+w_k\text{SINR}_k)$ where
$w_k>0$ are weighting factors to prioritize different users and enable us to achieve any point on the Pareto
  boundary of the achievable rate region $(R_1,\ldots,R_K)$ by varying
  the weights \cite{BJDO2014}. It is trivial to extend Theorem
\ref{theorem1} to this case and prove that the optimal length of
training equal to $K$. Since
$(1-\frac{\tau_p}{T})\log_2(1+w_k\text{SINR}_k)$ is monotonically
increasing in $w_k\text{SINR}_k$, it is equivalent to choose objective
as $\min_k w_k\text{SINR}_k$.
\subsection{Max-Min for MRC}

With MRC, the power control problem becomes
\begin{equation}\label{maxminsinr}
\begin{aligned}
& \underset{\{p_p^k\},~\{p_d^k\}}{\text{maximize}}
& & \underset{k}{\text{min}} ~\frac{w_kMp_d^k \gamma_k }{1+\sum_{j= 1}^K \beta_j p_d^j}   \\
& \text{subject to}
& & \tau_p p_{p}^k+(T-\tau_p) p_d^k \leq E_{k}, \forall k \\
& & & p_p^k\geq 0, p_d^k\geq0, \forall k.
\end{aligned}
\end{equation}

\subsubsection{Geometric Program Formulation}

Using the epigraph form of \eqref{maxminsinr} we have the following equivalent problem formulation:
\begin{equation}\label{maxmin}
\begin{aligned}
& \underset{\{p_p^k\},\{p_d^k\},~\lambda}{\text{maximize}}
& &  \lambda  \\
& ~\text{subject to}
& & w_kMp_d^k \tau_pp_p^k \beta_k^2 \geq \\
&&& \lambda(1+\sum_{j=1}^K \beta_j p_d^j +\tau_p p_p^k\beta_k \\
&&&+ \tau_p p_p^k \beta_k \sum_{j=1}^K \beta_j p_d^j), \forall k\\
& & & \tau_p p_{p}^k+(T-\tau_p) p_d^k \leq E_{k}, \forall k \\
& & & p_p^k\geq 0, p_d^k\geq0, \forall k.
\end{aligned}
\end{equation}

This problem is non-convex as it is formulated here, however we recognize it as a geometric program (GP). The GP formulation has been considered in the conference version of this paper\cite{CBL2015}. Since we next present a new semi-closed form solution with much lower complexity, the GP details are omitted here and we refer the interested readers to \cite{CBL2015}.

\subsubsection{Explicit Solution}
Next we develop a semi-closed form solution to the max-min fairness problem. Before we present the solution, we need the following lemma:

\begin{lemma}\label{equalsnr}
At the optimal point, all $w_kp_d^k\gamma_k$ are equal, i.e.,
\begin{equation}\label{equalsnr2}
w_kp_d^k\gamma_k=w_jp_d^j\gamma_j, ~\forall ~j,k=1,\ldots,K.
\end{equation}
\end{lemma}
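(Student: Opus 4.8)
The plan is to argue by contradiction, exploiting the structural fact that the interference denominator $1+\sum_{j=1}^K\beta_j p_d^j$ in $\mathrm{SINR}_k$ is \emph{common} to all users, so the weighted max-min objective of \eqref{maxminsinr} depends only on the smallest of the effective gains $w_k p_d^k\gamma_k$. First I would record two preliminaries. (i) An optimal solution of \eqref{maxminsinr} exists, since the feasible set is nonempty, closed and bounded (each $p_p^k,p_d^k$ is bounded through the energy constraint) and the objective is continuous. (ii) The optimal value $\lambda^\star$ is strictly positive: the allocation $p_p^k=p_d^k=E_k/T$ is feasible and yields $\gamma_k>0$ and $p_d^k>0$ for every $k$, hence a positive objective. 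Consequently, at any optimal point we have $p_d^k>0$ and $w_k p_d^k\gamma_k>0$ for all $k$; in particular $\min_k w_k p_d^k\gamma_k>0$.

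Now suppose, toward a contradiction, that at an optimal point $\{p_p^k\},\{p_d^k\}$ the quantities $w_k p_d^k\gamma_k$ are not all equal. Pick an index $\ell$ achieving $\max_k w_k p_d^k\gamma_k$, so that $w_\ell p_d^\ell\gamma_\ell>\min_k w_k p_d^k\gamma_k$ and $p_d^\ell>0$. Let $D=1+\sum_{j=1}^K\beta_j p_d^j>0$ denote the common denominator; then user $\ell$'s weighted SINR is $\frac{w_\ell M p_d^\ell\gamma_\ell}{D}=\frac{M}{D}\max_k w_k p_d^k\gamma_k>\frac{M}{D}\min_k w_k p_d^k\gamma_k=\lambda^\star$, i.e.\ user $\ell$ is strictly above the bottleneck. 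I would then perturb the solution by decreasing $p_d^\ell$ to $p_d^\ell-\delta$ for small $\delta>0$, leaving all other variables (in particular $p_p^\ell$, hence $\gamma_\ell$) unchanged. This point is feasible for $\delta$ small enough: the $\ell$-th energy constraint only gains slack, the other constraints are untouched, and $p_d^\ell-\delta\geq0$.

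The denominator drops to $D-\beta_\ell\delta$. For every $j\neq\ell$ the weighted SINR strictly increases, $\frac{w_j M p_d^j\gamma_j}{D-\beta_\ell\delta}>\frac{w_j M p_d^j\gamma_j}{D}\geq\lambda^\star$; and for $j=\ell$ the weighted SINR $\frac{w_\ell M(p_d^\ell-\delta)\gamma_\ell}{D-\beta_\ell\delta}$ is continuous in $\delta$ and strictly exceeds $\lambda^\star$ at $\delta=0$, hence stays $>\lambda^\star$ for $\delta$ sufficiently small. Therefore $\min_k$ of the weighted SINRs at the perturbed point exceeds $\lambda^\star$, contradicting optimality; so all $w_k p_d^k\gamma_k$ coincide. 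The parts that are purely routine are the feasibility bookkeeping and the ``small $\delta$'' continuity estimate for user $\ell$; the only place needing a bit of care is ensuring $p_d^\ell>0$ at the optimum so that there is room to decrease it, which is precisely why the observation $\lambda^\star>0$ is established first. I do not anticipate any genuine obstacle beyond this.
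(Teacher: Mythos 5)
Your proposal is correct and follows essentially the same route as the paper: a perturbation/contradiction argument that decreases the data power of a user strictly above the bottleneck, using the fact that the denominator $1+\sum_j \beta_j p_d^j$ is common to all users so that this raises every other user's weighted SINR while the perturbed user stays above the minimum for small $\delta$, after which equality of the weighted SINRs (equivalently, of the numerators $w_k p_d^k \gamma_k$) follows. The extra bookkeeping you add (existence of an optimum, $\lambda^\star>0$, hence $p_d^\ell>0$) is sound and only makes explicit what the paper leaves implicit.
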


\begin{proof} First we need the key observation that at the optimal solution, all weighted $\text{SINR}_k$ are equal. We prove this by contradiction. Assume that at the optimal solution, there is at least one user $k$ that has a higher weighted SINR than the others. Denote the minimum weighted SINR at the optimal solution as $\text{SINR}^*$. We can then construct a new solution by decreasing $p_d^k$ by $\delta>1$ while maintaining that $w_k\text{SINR}_k>\text{SINR}^*$. Since $w_k\text{SINR}_k$ is a continuous increasing function in $p_d^k$, we can always find such $\delta>1$. Keeping the other users' powers fixed, we have increased all other users' weighted SINRs. Then we have $w_j\text{SINR}_j>\text{SINR}^*,~\forall j$, hence we constructed a solution that is better than the optimal solution, which is a contradiction to the initial assumption. Therefore at the optimal solution all weighted $\text{SINR}_k$ are equal, and we have
\begin{equation}\label{equalsinr}
w_k\text{SINR}_k=\frac{w_kMp_d^k \gamma_k}{1+\sum_{j=1}^K \beta_j p_d^j}=\widebar{\text{SINR}}, \forall k=1,\ldots,K,
\end{equation}
where $\widebar{\text{SINR}}$ is the common weighted SINR for every user.
We observe that the denominator is the same for every user $k$. Therefore the numerator of \eqref{equalsinr} is the same for all $k$, which leads to \eqref{equalsnr2}.
\end{proof}
We call $w_kp_d^k\gamma_k$ the weighted receive signal power $k$ ($\text{SP}_k$).
Then we want to find the $p_d^k$ that satisfies Lemma \ref{lemma1} for any given value of $x=w_kp_d^k\gamma_k$, which is provided in the following proposition:
\begin{proposition}\label{pusolution}
For any given value of the weighted $\text{SP}_k~w_kp_d^k\gamma_k=x$, the optimal $p_d^k$ is given in \eqref{root} on top of next page. When \eqref{root} is not real-valued, then such $\text{SP}_k$ is not attainable by any feasible power allocation.
\end{proposition}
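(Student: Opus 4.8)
The plan is to collapse the two power variables of user~$k$ into a single one and then solve a scalar quadratic. By Lemma~\ref{lemma1} the energy budget is met with equality at the optimum, so $\tau_p p_p^k = E_k-(T-\tau_p)p_d^k$ (with $\tau_p=K$ by Theorem~\ref{theorem1}). Substituting this expression into $\gamma_k=\tfrac{\tau_p p_p^k\beta_k^2}{1+\tau_p p_p^k\beta_k}$ rewrites $\gamma_k$, and hence the weighted receive signal power $w_kp_d^k\gamma_k$, as an explicit function of the single variable $p_d^k$.

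Next I would impose $w_kp_d^k\gamma_k=x$ and clear the denominator $1+\tau_p p_p^k\beta_k=1+\beta_k\bigl(E_k-(T-\tau_p)p_d^k\bigr)$. After rearranging, the equation becomes a quadratic in $p_d^k$ of the form $A(p_d^k)^2-Bp_d^k+C=0$ with $A=w_k\beta_k^2(T-\tau_p)$, $B=w_k\beta_k^2E_k+x\beta_k(T-\tau_p)$ and $C=x(1+\beta_kE_k)$, all strictly positive for $x>0$. Applying the quadratic formula yields two candidate values for $p_d^k$, one of which is~\eqref{root}.

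To identify which root is the right one, I would invoke Lemma~\ref{equalsnr}: at the optimum all weighted signal powers coincide with the common value $x$, so the common weighted SINR equals $\tfrac{Mx}{1+\sum_{j}\beta_jp_d^j}$, which for fixed $x$ is strictly decreasing in every $p_d^j$. Hence each $p_d^k$ should be taken as small as possible among the values compatible with $w_kp_d^k\gamma_k=x$, i.e.\ the \emph{smaller} of the two roots, which is the one with the negative square root term in~\eqref{root}; this is also consistent with the limit $x\to 0^+$, in which the smaller root tends to $0$ and the larger tends to $E_k/(T-\tau_p)$. A short computation that checks the sign of the quadratic at the endpoints $p_d^k=0$ and $p_d^k=E_k/(T-\tau_p)$ of the admissible interval then confirms that, whenever $x$ is attainable, the selected root lies inside $[0,E_k/(T-\tau_p)]$, so that the induced pilot power $p_p^k$ is nonnegative and the whole allocation is feasible.

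For the last sentence of the proposition I would observe that, viewed as a function of $p_d^k\in[0,E_k/(T-\tau_p)]$ (with $\tau_p p_p^k=E_k-(T-\tau_p)p_d^k$), the weighted signal power $w_kp_d^k\gamma_k$ vanishes at both endpoints and is positive in between, hence attains a finite maximum; the target $x$ is met by some feasible allocation precisely when it does not exceed this maximum, which is exactly the regime in which the discriminant $B^2-4AC$ is nonnegative. Consequently, if~\eqref{root} fails to be real-valued no feasible power allocation for user~$k$ achieves $\mathrm{SP}_k=x$. The step I expect to need the most care is this equivalence between the sign of the discriminant and the hump-shaped profile of $w_kp_d^k\gamma_k$, together with checking that the chosen root indeed stays in the admissible interval; the remaining manipulations are elementary algebra.
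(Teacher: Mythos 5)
Your overall route is the same as the paper's: eliminate the pilot power via Lemma \ref{lemma1} and Theorem \ref{theorem1}, reduce $w_kp_d^k\gamma_k=x$ to a quadratic in $p_d^k$ (your $A,B,C$ are exactly the paper's coefficients multiplied by $w_k$), and select the smaller root because, at fixed signal power $p_d^k\gamma_k$, a smaller $p_d^k$ only shrinks the interference term in \eqref{sinrmrc}. The paper handles root positivity more quickly (Vieta: sum and product of the roots are positive), but those parts of your argument are sound.

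The flaw is in your last paragraph: the asserted equivalence ``$x$ attainable $\Leftrightarrow$ $B^2-4AC\geq 0$'' is false in one direction. With $x'=x/w_k$, the discriminant is proportional to $(T-K)^2x'^2-2(T-K)(E_k\beta_k+2)x'+E_k^2\beta_k^2$, a convex quadratic in $x'$ that is positive at $x'=0$ and has two positive roots $x_1'<x_2'$ (its own discriminant is $4(T-K)^2(4E_k\beta_k+4)>0$, the same quantity computed in the proof of Theorem \ref{snr}). Hence \eqref{root} is real-valued not only for $x'\leq x_1'$ (your hump maximum) but also for every $x'\geq x_2'$; in that second regime both roots of the quadratic exceed $(E_k+1/\beta_k)/(T-K)>E_k/(T-K)$, since they come from the spurious branch of $\frac{p_d^k(E_k-(T-K)p_d^k)\beta_k^2}{1+(E_k-(T-K)p_d^k)\beta_k}$ where numerator and denominator are both negative, so they require negative pilot energy and $x$ is not attainable even though \eqref{root} is real. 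Concretely, for $\beta_k=E_k=T-K=w_k=1$ and $x=6$ the quadratic is $t^2-7t+12=0$ with roots $3,4$, both outside $[0,1]$. Fortunately the proposition (and the paper's proof) claim only the direction you actually need, namely that no real roots implies non-attainability, and that is immediate: any feasible allocation achieving $\mathrm{SP}_k=x$ can be converted into a budget-saturating one achieving exactly $x$ (raise $p_p^k$ to saturate the budget, which increases $\gamma_k$, then lower $p_d^k$ and use continuity), and such an allocation is a real root of the quadratic. Drop the ``precisely/exactly'' claim and keep only this necessity argument; with that repair your proof coincides with the paper's.
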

\begin{figure*}[!t]
\begin{equation}\label{root}
p_d^k=\frac{E_{k}\beta_k +(T-K)\frac{x}{w_k}-\sqrt{E_{k}^2\beta_k^2-2(T-K)(E_{k}\beta_k+2)\frac{x}{w_k}+(T-K)^2(\frac{x}{w_k})^2}}{2(T-K)\beta_k}
\end{equation}
\hrulefill
\end{figure*}
\begin{proof} Making use of Lemma \ref{lemma1} and Theorem \ref{theorem1}, we have the following equation:
\begin{equation}
\frac{p_d^k(E_{k}-(T-K)p_d^k)\beta_k^2}{1+(E_{k}-(T-K)p_d^k)\beta_k}=\frac{x}{w_k}.
\end{equation}
This is equivalent to the quadratic equation
\begin{equation}
\begin{aligned}
&(T-K)\beta_k^2(p_d^k)^2-\beta_k\left(E_{k}\beta_k+\frac{(T-K)x}{w_k}\right)p_d^k\\
&+\frac{(E_{k}\beta_k+1)x}{w_k}=0.
\end{aligned}
\end{equation}
If the equation has real-valued roots, we observe that sum of roots and products of roots are positive, therefore both roots of the equation are positive. Inspecting \eqref{sinrmrc} we see that smaller $p_d^k$ gives a higher $\text{SINR}_k$ when $p_d^k\gamma_k$ is fixed. Therefore we arrive at the result. Moreover when the quadratic equation does not have real-valued roots, then $w_kp_d^k\gamma_k<x$ for all feasible $p_d^k$ and therefore such $\text{SP}_k$ is not attainable.
\end{proof}

We now reformulate Problem \eqref{maxminsinr} in terms of SP as presented in the following proposition:
\begin{proposition}
Problem \eqref{maxminsinr} is reduced to the optimization problem \eqref{maxmint} with one variable (given on top of next page),
where the optimization is done in the domain where the objective function is real, i.e., $x$ is constrained to be achievable for every user $k$. Finding the optimal $x$ in \eqref{maxmint} gives the optimal common SP for every user. By using Proposition \ref{pusolution} we can find the optimal $p_d^k$ and $p_p^k$ for every user $k$ to achieve this optimal common SP.
\end{proposition}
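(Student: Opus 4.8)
The plan is to use the equal-weighted-SINR structure to collapse \eqref{maxminsinr} to a scalar problem. By the argument in the proof of Lemma~\ref{equalsnr}, at any optimal solution of \eqref{maxminsinr} all weighted SINRs coincide, so the objective equals the common value $\frac{Mx}{1+\sum_{j=1}^K\beta_jp_d^j}$, where $x:=w_kp_d^k\gamma_k$ is the common weighted received signal power (the same for every $k$, again by Lemma~\ref{equalsnr}). The only place where the users interact is the term $\sum_{j=1}^K\beta_jp_d^j$ in the denominator; therefore, for a fixed common value $x$, the common weighted SINR is maximized by making $\sum_j\beta_jp_d^j$ as small as possible, which decouples into minimizing each $\beta_jp_d^j$ separately subject to $w_jp_d^j\gamma_j=x$ and the energy budget (tight, by Lemma~\ref{lemma1}, with $\tau_p=K$ by Theorem~\ref{theorem1}). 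This per-user minimization is exactly the content of Proposition~\ref{pusolution}: among the (at most two) admissible values of $p_d^j$ consistent with the common signal power $x$, the smaller one, given by \eqref{root}, gives the smaller $\beta_jp_d^j$, and also the higher $\mathrm{SINR}_j$ in view of \eqref{sinrmrc}. Substituting $p_d^k=p_d^k(x)$ into the objective turns it into a function of the single variable $x$, which is \eqref{maxmint}.

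Next I would establish the equivalence in both directions. For ``$\le$'': take an optimal solution of \eqref{maxminsinr} with common signal power $x^\star$; if some $p_d^{k\star}$ equalled the larger admissible root rather than \eqref{root}, replacing it by the smaller root $p_d^k(x^\star)$ would keep the solution feasible (both admissible roots lie in $(0,E_k/(T-K))$, so the induced $p_p^k=(E_k-(T-K)p_d^k)/K$ stays nonnegative) while strictly decreasing the denominator, contradicting optimality; hence $p_d^{k\star}=p_d^k(x^\star)$ for all $k$, the value $x^\star$ lies in the domain where \eqref{root} is real for every user, and the optimal value of \eqref{maxminsinr} equals the objective of \eqref{maxmint} at $x^\star$. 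For ``$\ge$'': for any $x$ in that domain, setting $p_d^k=p_d^k(x)$ and $p_p^k=(E_k-(T-K)p_d^k)/K$ yields a feasible point of \eqref{maxminsinr} with $w_kp_d^k\gamma_k=x$ for all $k$, so $\min_k w_k\mathrm{SINR}_k$ equals the objective of \eqref{maxmint} at $x$. Hence the two problems share the same optimal value, and an optimizer $x^\star$ of \eqref{maxmint} recovers the optimal $\{p_d^k\}$ and $\{p_p^k\}$ through Proposition~\ref{pusolution} and Lemma~\ref{lemma1}.

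The step I expect to be the main obstacle is the ``decouple and pick the smaller root'' argument: one must confirm that minimizing each $\beta_jp_d^j$ individually is legitimate for the coupled objective --- which rests on the fact that, once the common signal power is fixed, the users interact only through the shared denominator $1+\sum_j\beta_jp_d^j$ --- and that the smaller root is always simultaneously energy-feasible, i.e.\ induces $p_p^k\ge 0$. The latter is handled by noting that, under $\tau_p=K$ and a tight energy budget, the attainable signal power $w_kp_d^k\gamma_k$ is a continuous function of $p_d^k$ on $[0,E_k/(T-K)]$ that vanishes at both endpoints, so every attainable positive value (in particular the common $x$) is realized at an interior $p_d^k$, hence with $p_p^k>0$; the same observation shows that the admissible range of $x$ in \eqref{maxmint} is a nonempty interval, so that \eqref{maxmint} is well posed. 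Everything beyond these points is direct substitution.
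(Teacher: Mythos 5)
Your proof is correct and follows essentially the same route as the paper: invoke Lemma \ref{equalsnr} to equalize the weighted signal powers, use Proposition \ref{pusolution} (with the tight budget from Lemma \ref{lemma1} and $\tau_p=K$ from Theorem \ref{theorem1}) to express each $p_d^k$ as the smaller root, and substitute into the common SINR, whose reciprocal (up to the factor $1/M$ and an additive constant $\tfrac{1}{2}\sum_k 1/w_k$, neither of which affects the minimizer) is \eqref{maxmint}. Your explicit two-direction equivalence and the check that the smaller root keeps $p_p^k\geq 0$ merely flesh out details the paper leaves implicit.
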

\begin{figure*}[!t]
\normalsize
\begin{equation}\label{maxmint}
\begin{aligned}
& \underset{x\geq0}{\text{minimize}}
& & \frac{1}{x}+\frac{E_{k}}{2(T-K)x}\sum_k \beta_k-\frac{\sum_k\sqrt{\frac{E_{k}^2\beta_k^2-2(T-K)(E_{k}\beta_k+2)\frac{x}{w_k}+(T-K)^2(\frac{x}{w_k})^2}{x^2}} }{2(T-K)},  \\
\end{aligned}
\end{equation}
\begin{equation}\label{maxminx}
\begin{aligned}
&\frac{1}{2(T-K)}\sum_k \frac{E_{k}^2\beta_k^2y-(T-K)(E_{k}\beta_k+2)\frac{1}{w_k}}{\sqrt{E_{k}^2 \beta_k^2 y^2-2(T-K)(E_{k}\beta_k+2)\frac{1}{w_k}y+(T-K)^2\frac{1}{w_k^2}}}=1+\frac{E_{k}}{2(T-K)}\sum_k \beta_k.
\end{aligned}
\end{equation}
\hrulefill
\end{figure*}

\begin{proof} We first define $x=w_kp_d^k\gamma_k$ and substitute the results from Proposition \ref{pusolution} into the expression of $\text{SINR}_k$. Then the objective function is obtained by changing the maximization of SINR to minimization of 1/SINR and simplifying the expression.
\end{proof}

Finally we present the solution to Problem \eqref{maxmint}:
\begin{theorem}\label{snr}
The common SP that maximizes the minimum weighted SINR is given by $1/y$ where $y$ is the unique optimal solution to an strictly convex optimization problem, and the unique real-valued solution can be found by solving the equation \eqref{maxminx} on top of next page.
\end{theorem}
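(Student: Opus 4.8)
The plan is to recast the one-variable problem \eqref{maxmint} as a strictly convex program through the reciprocal substitution $y=1/x$ (so that the optimal common SP is $1/y$), to prove strict convexity and existence of a unique minimizer, and then to show that the first-order condition characterising this minimizer is exactly \eqref{maxminx}.

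First I would substitute $x=1/y$ into \eqref{maxmint}. Writing $q_k(y)=E_k^2\beta_k^2\,y^2-2(T-K)(E_k\beta_k+2)\frac{y}{w_k}+(T-K)^2\frac{1}{w_k^2}$, the radicand in \eqref{maxmint} equals $q_k(y)$ and the objective turns into $g(y)=\big(1+\tfrac{1}{2(T-K)}\sum_k E_k\beta_k\big)\,y-\tfrac{1}{2(T-K)}\sum_k\sqrt{q_k(y)}$. The first group of terms is affine, so strict convexity of $g$ reduces to strict concavity of each $\sqrt{q_k}$ on $\{q_k\ge 0\}$. For a quadratic $q(y)=ay^2-by+c$ with $a>0$ one computes $\frac{d^2}{dy^2}\sqrt{q}=\frac{4ac-b^2}{4\,q^{3/2}}$, which is strictly negative precisely when the discriminant $b^2-4ac>0$; here $b^2-4ac=16(T-K)^2w_k^{-2}(E_k\beta_k+1)>0$ since $T>K$, $w_k>0$ and $E_k\beta_k>0$. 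Hence each $\sqrt{q_k}$ is strictly concave and $g$ is strictly convex wherever all $q_k(y)\ge 0$.

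Next I would pin down the feasible domain. Real-valuedness of \eqref{root} in Proposition \ref{pusolution} is exactly the condition $q_k(y)\ge 0$ for every $k$; since each $q_k$ is an upward parabola with positive discriminant, the set of SP values attainable by user $k$ is $(0,x_1^k]$, where $x_1^k$ is the smaller root of the radicand seen as a quadratic in $x$ (beyond it no feasible pair $(p_p^k,p_d^k)$ realises that SP), so in the $y$-variable the feasible set is the single ray $[y_{\min},\infty)$ with $y_{\min}=\max_k 1/x_1^k$. On this ray $g$ is finite, continuous and strictly convex; for coercivity I would use the expansion $\sqrt{q_k(y)}=E_k\beta_k\,y-\frac{(T-K)(E_k\beta_k+2)}{w_kE_k\beta_k}+O(1/y)$, whose leading term cancels the $\tfrac{1}{2(T-K)}\sum_kE_k\beta_k\cdot y$ contribution and leaves $g(y)=y+c_0+O(1/y)\to\infty$ as $y\to\infty$. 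Finiteness at $y_{\min}$, continuity, coercivity and strict convexity together give a unique global minimizer $y^{\star}$ of $g$ on $[y_{\min},\infty)$, and chaining back through Proposition \ref{pusolution} and \eqref{maxmint} the optimal common SP equals $1/y^{\star}$.

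Finally I would derive \eqref{maxminx} as the stationarity condition. Differentiating, $g'(y)=0$ reads $1+\tfrac{1}{2(T-K)}\sum_kE_k\beta_k=\tfrac{1}{2(T-K)}\sum_k\frac{E_k^2\beta_k^2\,y-(T-K)(E_k\beta_k+2)/w_k}{\sqrt{q_k(y)}}$, which is precisely \eqref{maxminx}. To confirm that this stationary point lies in the interior and is the minimizer, observe that $g'$ is continuous and strictly increasing on $(y_{\min},\infty)$; as $y\downarrow y_{\min}$ the radical $\sqrt{q_{k^{\star}}(y)}$ of the active index $k^{\star}$ (the one attaining $y_{\min}=\max_k 1/x_1^k$) vanishes while $q_{k^{\star}}'(y_{\min})>0$, so $g'(y)\to-\infty$, whereas $g'(y)\to 1>0$ as $y\to\infty$; hence $g'$ has a unique zero $y^{\star}$, so \eqref{maxminx} has a unique real solution, namely the minimizer. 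I expect the fiddliest part to be this domain bookkeeping — verifying that the feasible $y$-set really is a single ray and that the minimizer is interior, together with the boundary behaviour of $g'$ — because the concavity computation and the asymptotics are routine once the identity $b^2-4ac=16(T-K)^2w_k^{-2}(E_k\beta_k+1)$ is in hand.
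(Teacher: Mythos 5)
Your proposal is correct and follows essentially the same route as the paper's proof: the substitution $y=1/x$ turning \eqref{maxmint} into \eqref{maxminy}, the second-derivative/discriminant argument showing each square-root term is strictly concave (your value $16(T-K)^2w_k^{-2}(E_k\beta_k+1)$ is the paper's $4(T-K)^2w_k^{-2}(4E_k\beta_k+4)$), hence strict convexity, and the stationarity condition being exactly \eqref{maxminx}. The extra bookkeeping you supply on the feasible ray, coercivity, and the boundary behaviour of $g'$ only makes rigorous the existence and interiority of the unique minimizer, which the paper asserts without detail.
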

\begin{proof}
First we make the change of variable $y=1/x$ in \eqref{maxmint}, then we have the following problem:
\begin{equation}\label{maxminy}
\begin{aligned}
& \underset{y}{\text{minimize}} ~~y+\frac{yE_{k}}{2(T-K)}\sum_k \beta_k- \\
&\frac{\sum_k\sqrt{E_{k}^2\beta_k^2y^2-2(T-K)(E_{k}\beta_k+2)\frac{y}{w_k}+(T-K)^2(\frac{1}{w_k})^2}}{2(T-K)}.  \\
\end{aligned}
\end{equation}

The first term is linear, thus the objective is convex if the last term, which has the form $f(y)=\sqrt{ay^2+by+c}$ is concave.  This is verified by taking the second derivative of $f(y)$ which gives
\begin{equation}
\frac{1}{4}\frac{4ac-b^2}{(ax^2+bx+c)^{3/2}}.
\end{equation}
The second derivative is non-positive when $b^2-4ac\geq0$, in such case $f(y)$ is concave.

The $k^{th}$ square root term in \eqref{maxminy} satisfies $b^2-4ac\geq0$ as
\begin{equation}
\begin{aligned}
&\left(2(T-K)(E_{k}\beta_k+2)\frac{1}{w_k}\right)^2-4E_{k}^2\beta_k^2(T-K)^2\left(\frac{1}{w_k}\right)^2\\
=&4(T-K)^2\left(\frac{1}{w_k}\right)^2(4E_{k}\beta_k+4)>0,
\end{aligned}
\end{equation}
and hence it is strictly concave. The overall function is thus strictly convex. Hence the optimal $y$ can be found by setting the first derivative of the objective to zero and the unique solution is found.
\end{proof}

Since we know that \eqref{maxminy} is a strictly convex function in $y$, hence there will be only one optimal solution and it can be found by line search, such as using bisection method, which makes it easy to implement.

To summarize, we provided a semi-closed form solution to the max-min SE problem with the following procedure:
\begin{enumerate}
\item Find the optimal common weighted SP by solving \eqref{maxminx} given in Theorem \ref{snr}, using e.g. bisection.
\item For this SP find all the optimal $p_d^k$ using Proposition \ref{pusolution}.
\item Find the optimal $p_p^k$ using Lemma \ref{lemma1}.
\end{enumerate}

Finding the optimal power control parameters is reduced to solving an equation with a single variable (or a single-variable convex problem). Therefore the complexity is linear in the number of users being served and independent of the number of antennas, which can be implemented in real-time at the BS.

\subsection{Max-Min for ZF}
Similar to the case of the MRC detector, we can write the problem as max-min weighted SINR as follows:
\begin{equation}\label{maxminsinrzf}
\begin{aligned}
& \underset{\{p_p^k\},~\{p_d^k\}}{\text{maximize}}
& & \underset{k}{\text{min}} ~\frac{w_k(M-K)p_d^k \gamma_k}{1+\sum_{j=1}^K  p_d^j(\beta_j-\gamma_j)}  \\
& \text{subject to}
& & \tau_p p_{p}^k+(T-\tau_p) p_d^k \leq E_{k}, \forall k \\
& & & p_p^k\geq 0, p_d^k\geq0, \forall k.
\end{aligned}
\end{equation}

The only difference from \eqref{maxminsinr} is the expressions of the $\text{SINR}$s, which is now taken from \eqref{sinrzf} by inserting $\tau_p=K$. Due to the negative terms appearing in the denominator of the SINR expressions, this problem cannot be directly transformed to a GP problem. Fortunately we observe that the denominators of the SINRs are the same for all users, therefore we can state a similar result as Lemma \ref{equalsnr}.
\begin{lemma}\label{equalsnrzf}
For the ZF detector, at the optimal point, all $w_kp_d^k\gamma_k$ are equal, i.e.,
\begin{equation}
w_kp_d^k\gamma_k=w_jp_d^j\gamma_j, ~\forall ~j,k=1,\ldots,K.
\end{equation}
\end{lemma}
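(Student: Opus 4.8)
The plan is to mirror the two-step argument used for Lemma~\ref{equalsnr}, exploiting the fact, already noted in the text, that the denominator $1+\sum_{j=1}^K p_d^j(\beta_j-\gamma_j)$ in \eqref{sinrzf} is common to all users.

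First I would establish the intermediate fact that at the optimal solution all weighted SINRs $w_k\mathrm{SINR}_k$ are equal, arguing by contradiction exactly as in the MRC case. Suppose some user $k$ has a weighted SINR strictly larger than the minimum value $\widebar{\mathrm{SINR}}$ attained at the optimum. Keeping all pilot powers (hence all $\gamma_j$) and all other users' data powers fixed, decrease $p_d^k$ slightly. Then: (i) user $k$'s own weighted SINR is a continuous, strictly increasing function of $p_d^k$, since it has the form $\frac{a p_d^k}{b+c p_d^k}$ with $a,b,c>0$, so it stays above $\widebar{\mathrm{SINR}}$ for a small enough decrease; and (ii) the common denominator strictly decreases, so every other user's weighted SINR strictly increases, hence also stays above $\widebar{\mathrm{SINR}}$. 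Lowering a power only relaxes the energy constraint, so the new point is feasible; it has a strictly larger minimum weighted SINR, contradicting optimality. Thus all $w_k\mathrm{SINR}_k$ coincide.

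The single point that needs a short check — and the only place the ZF structure differs from MRC — is that decreasing $p_d^k$ really does decrease the denominator, i.e. that $\beta_j-\gamma_j>0$. This follows from $\gamma_j=\frac{\tau_p p_p^j\beta_j^2}{1+\tau_p p_p^j\beta_j}$, which gives $\beta_j-\gamma_j=\frac{\beta_j}{1+\tau_p p_p^j\beta_j}>0$; so the coefficient of $p_d^k$ in the denominator is strictly positive, which both validates step (ii) above and confirms that $\mathrm{SINR}_k$ has the claimed form $\frac{a p_d^k}{b+c p_d^k}$ with positive constants used in step (i).

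Finally, having shown $w_k\mathrm{SINR}_k=\frac{w_k(M-K)p_d^k\gamma_k}{1+\sum_{j=1}^K p_d^j(\beta_j-\gamma_j)}=\widebar{\mathrm{SINR}}$ for all $k$, I would observe that the denominator is independent of $k$, so equality of the left-hand sides forces equality of the numerators, i.e. $w_kp_d^k\gamma_k=w_jp_d^j\gamma_j$ for all $j,k$, which is the assertion. I do not expect a genuine obstacle here: the reasoning is essentially identical to the MRC case, with the positivity $\beta_j>\gamma_j$ being the only extra item to verify.
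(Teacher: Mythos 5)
Your proposal is correct and follows essentially the same route as the paper: the paper omits the proof of Lemma~\ref{equalsnrzf}, stating it is analogous to that of Lemma~\ref{equalsnr}, and your contradiction argument (slightly decreasing $p_d^k$ for a user whose weighted SINR exceeds the minimum, then equating numerators since the denominator is common) is exactly that analogue. Your explicit check that $\beta_j-\gamma_j=\frac{\beta_j}{1+\tau_p p_p^j\beta_j}>0$ is the one ZF-specific detail the paper leaves implicit, and it is verified correctly.
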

The proof is similar to that of Lemma \ref{equalsnr} and is omitted.

By using Lemma \ref{equalsnrzf} we obtain the following important result:
\begin{theorem}\label{zfequalmrc}
Problem \eqref{maxminsinrzf} can be reformulated as
\begin{equation}\label{maxminsinrzfequ}
\begin{aligned}
& \underset{\{p_p^k\},~\{p_d^k\}}{\mathrm{maximize}}
& & \underset{k}{\mathrm{min}} ~ \frac{w_k(M-K)p_d^k \gamma_k}{1+\sum_{j=1}^K  p_d^j\beta_j}  \\
& \mathrm{subject ~to}
& & \tau_p p_{p}^k+(T-\tau_p) p_d^k \leq E_{k}, \forall k \\
& & & p_p^k\geq 0, p_d^k\geq0, \forall k.
\end{aligned}
\end{equation}
This implies that solving problem \eqref{maxminsinrzfequ} gives the same optimal ${p_d^k},{p_p^k}$ as solving problem \eqref{maxminsinrzf}, but the objective value is different.
\end{theorem}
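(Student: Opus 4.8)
The plan is to reduce both problems to a one-dimensional optimization over the common weighted receive signal power, and then to observe that the two resulting one-dimensional objectives are related by a simple affine shift of their reciprocals, so that they share the same maximizer.

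First I would pin down the structure of an optimal solution of either problem. By Lemma~\ref{lemma1} and Theorem~\ref{theorem1} we may take $\tau_p=K$ and $Kp_p^k+(T-K)p_d^k=E_k$ for all $k$ at the optimum, so that $p_p^k$ is determined by $p_d^k$ and $\gamma_k=\frac{(E_k-(T-K)p_d^k)\beta_k^2}{1+(E_k-(T-K)p_d^k)\beta_k}$ is a function of $p_d^k$ alone. Problem~\eqref{maxminsinrzfequ} has exactly the same max--min-of-SINRs form as the MRC problem~\eqref{maxminsinr}, only with $M$ replaced by $M-K$, so the argument behind Lemma~\ref{equalsnr} applies to it unchanged; Lemma~\ref{equalsnrzf} gives the analogous conclusion for~\eqref{maxminsinrzf}. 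Hence at the optimum of either problem all the quantities $w_kp_d^k\gamma_k$ equal a common value, which I denote $x$. I would also record that $\beta_j-\gamma_j=\frac{\beta_j}{1+\tau_p p_p^j\beta_j}>0$, so the ZF denominator $1+\sum_j p_d^j(\beta_j-\gamma_j)$ is always at least $1$, hence strictly positive.

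The crucial step is to show that, for any fixed common value $x=w_kp_d^k\gamma_k$, the two problems are solved by the \emph{same} data-power vector. In both problems the denominator of the weighted SINR is common to all users, so among feasible allocations with common $x$ the weighted SINR is maximized precisely by making each $p_d^j$ as small as possible subject to $w_jp_d^j\gamma_j=x$ and the tight energy constraint --- and this smallest feasible $p_d^j$ is exactly $p_d^j(x)$ as given by~\eqref{root} in Proposition~\ref{pusolution}, the same function of $x$ for both problems. (If some $p_d^\ell$ at the optimum were not this smallest root, replacing it by the smallest root keeps $x$ unchanged, stays feasible, and strictly decreases $\sum_j\beta_jp_d^j$, which strictly increases every user's weighted SINR --- a contradiction.) Substituting $\sum_j p_d^j\gamma_j=x\sum_j 1/w_j$ into the ZF denominator, the optimal value of~\eqref{maxminsinrzf} as a function of $x$ is
\begin{equation*}
F_1(x)=\frac{(M-K)x}{1+\sum_j\beta_jp_d^j(x)-x\sum_j \frac{1}{w_j}},
\end{equation*}
while the optimal value of~\eqref{maxminsinrzfequ} is
\begin{equation*}
F_2(x)=\frac{(M-K)x}{1+\sum_j\beta_jp_d^j(x)},
\end{equation*}
and both are optimized over the same set of admissible $x$, namely those for which~\eqref{root} is real-valued for every $k$.

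Finally I would conclude by comparing reciprocals:
\begin{equation*}
\frac{1}{F_1(x)}=\frac{1}{F_2(x)}-\frac{1}{M-K}\sum_{j}\frac{1}{w_j}.
\end{equation*}
Thus $1/F_1$ and $1/F_2$ differ only by a positive constant; since $F_1(x)>0$ and $F_2(x)>0$ on the common domain, $1/F_1$ and $1/F_2$ attain their minima at the same point $x^\star$. Therefore the optimal common signal power coincides for the two problems, and with it the optimal $p_d^k=p_d^k(x^\star)$ and the optimal $p_p^k=(E_k-(T-K)p_d^k(x^\star))/K$, which is the claim; on the other hand $F_1(x^\star)\ne F_2(x^\star)$, so the optimal objective values differ by the constant above. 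I expect the main obstacle to be the middle step --- rigorously arguing that a fixed common signal power forces the identical data-power vector in the two problems --- because once that is in place the affine-reciprocal relation makes the coincidence of the maximizers immediate.
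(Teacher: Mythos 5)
Your proposal is correct and rests on the same key ingredients as the paper's proof: Lemma \ref{equalsnrzf} forces all weighted signal powers to a common value $x$, so the ZF self-interference term equals $x\sum_j 1/w_j$ and the two objectives differ only by a constant shift of their reciprocals, hence share the same optimal power allocation while the objective values differ. Your write-up is in fact somewhat more careful than the paper's short argument, since you make the equivalence precise by reducing both problems to a single variable via Proposition \ref{pusolution} and by verifying that, for each fixed common $x$, both problems select the identical (smaller-root) data powers.
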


\begin{proof}
Using Lemma \ref{equalsnrzf} we have $w_kp_d^k\gamma_k=w_jp_d^j\gamma_j$ at the optimal point. Moreover the denominator can be written as $1+\sum_{j=1}^K  p_d^j\beta_j-\sum_{j=1}^K p_d^j\gamma_j$ where the last term is equal to $w_kp_d^k\gamma_k\sum_{j=1}^K \frac{1}{w_j}$. Then we can rewrite the weighted SINR as
\begin{equation}
w_k\text{SINR}_k=\frac{M-K}{\frac{1}{w_kp_d^k\gamma_k}+\sum_j \frac{p_d^j\beta_j}{w_kp_d^k\gamma_k}-\sum_j\frac{1}{w_j}}
\end{equation}
Since $\sum_j\frac{1}{w_j}$ is a constant and the same for every user, the set of parameters that maximizes $\text{SINR}_k$ also maximizes the $\text{SINR}_k$ if the term $\sum_j\frac{1}{w_j}$ is removed. Therefore both problem are equivalent in the sense that they have the same optimal solutions.
\end{proof}

From Theorem \ref{zfequalmrc} we see that only the constant $M$ is replaced with $M-K$, therefore the power allocation that solves the weighted max-min SE for the MRC also solves the weighted max-min SE for the ZF. The same methods and analytical solutions apply. Therefore we don't need to do a separate optimization for ZF in this case. This implies that the users do not need to know what kind of detector is used at the BS. While the BS can switch between different detectors according to the data traffic requirements or power consumption restrictions.

\section{Joint Pilot and Data Power Control for Weighted Sum SE}\label{sumproblem}
In this section we solve the power control problem \eqref{opt} for the
weighted sum SE for MRC and ZF detector. This problem is selected to maximize the total system throughput, and weights are included to provide some fairness between different users. We
define the weighted sum SE by choosing $U(R_1,\ldots,R_K)=\sum_{k=1}^K
w_kR_k$.

Power control that maximizes sum SE when interference is present is known to be an $\mathrm{NP}$-hard problem in general under perfect channel knowledge\cite{LZ2008}. In this part we present a polynomial-time solution to one special case when all sources transmit to the same receiver. When channel estimation errors are present, with the bounding techniques we used for the achievable SE we discover a specific structure that lead to a convex reformulation after a series of transformations. Since optimizing the data power is considered to be a hard problem itself, in the following we first present the case when one only optimizes the data power, then the solution approach is extended to the case of joint optimization of pilot and data power.
\subsection{Weighted Sum SE for MRC}
By using Theorem \ref{theorem1}, \eqref{opt} now becomes the following optimization problem:
\begin{equation}\label{sumrate}
\begin{aligned}
& \underset{\{p_p^k\},~\{p_d^k\}}{\text{maximize}}
& &  \sum_k w_k\log_2\left(1+\frac{Mp_d^k \gamma_k }{1+\sum_{j= 1}^K \beta_j p_d^j}\right)   \\
& \text{subject to}
& & \tau_p p_{p}^k+(T-\tau_p) p_d^k \leq E_{k}, \forall k, \\
& & & p_p^k\geq 0, p_d^k\geq0, \forall k.
\end{aligned}
\end{equation}

\subsubsection{Optimizing Data Power}
In the case of optimizing data power only, the energy budget constraint reduced to the peak power constraint on the data power given as $P_k=(E_k-\tau_pp_p^k)/(T-\tau_p)$ for user $k$ where $p_p^k$ is now a constant. Therefore we have the following optimization problem:
\begin{equation}\label{sumratedata}
\begin{aligned}
& \underset{~\{p_d^k\}}{\text{maximize}}
& &  \sum_k w_k\log_2\left(1+\frac{Mp_d^k \gamma_k }{1+\sum_{j= 1}^K \beta_j p_d^j}\right)   \\
& \text{subject to}
& & p_d^k \leq P_{k}, \forall k, \\
& & & p_d^k\geq0, \forall k.
\end{aligned}
\end{equation}
In this case $\gamma_k$ are fixed constants and the optimization variables are the data power $p_d^k$ with individual power constraints. The formulation in \eqref{sumratedata} is non-convex. However, we use the observation that the denominator of the SINR expression is the same for every user, to obtain a convex reformulation as described in the following theorem:
\begin{theorem}\label{virtual_wf}
Problem \eqref{sumratedata} can be reformulated into the following convex form:
\begin{equation}\label{sumratedata2}
\begin{aligned}
& \underset{~s,~\{x_k\}}{\mathrm{maximize}}
& &  \sum_k w_k\log_2\left(1+a_k x_k \right)   \\
& \mathrm{subject~to}
& & x_k \leq \beta_kP_{k}s, \forall k, \\
& & & x_k\geq0, \forall k, \\
& & & \sum_{j= 1}^K x_j=1-s,
\end{aligned}
\end{equation}
where $a_k=M\gamma_k/\beta_k$.
The two formulations are equivalent in the sense that they have the same optimal objective value, and the solution to \eqref{sumratedata} can be obtained from solution to \eqref{sumratedata2} via $p_d^k=\frac{x_k}{s\beta_k}$.
\end{theorem}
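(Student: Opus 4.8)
The plan is to prove Theorem \ref{virtual_wf} by exhibiting an explicit, objective-preserving bijection between the feasible sets of \eqref{sumratedata} and \eqref{sumratedata2}, and then observing that \eqref{sumratedata2} is a concave maximization over a polytope. The construction is driven entirely by the fact, already exploited in Lemma \ref{equalsnr}, that the denominator $1+\sum_j\beta_j p_d^j$ is common to all users.

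First I would introduce the change of variables
\[ s = \frac{1}{1 + \sum_{j=1}^K \beta_j p_d^j}, \qquad x_k = \beta_k p_d^k s, \quad k=1,\ldots,K. \]
Since $p_d^j\ge 0$ we have $s\in(0,1]$ and $x_k\ge 0$. Summing gives $\sum_j x_j = s\sum_j\beta_j p_d^j = s\big((1/s)-1\big) = 1-s$, so the equality constraint of \eqref{sumratedata2} holds automatically; the box constraint $p_d^k\le P_k$ becomes $x_k\le \beta_k P_k s$; and the SINR term rewrites as $\frac{Mp_d^k\gamma_k}{1+\sum_j\beta_j p_d^j} = M\gamma_k p_d^k s = a_k x_k$ with $a_k = M\gamma_k/\beta_k$. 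Hence every feasible $\{p_d^k\}$ for \eqref{sumratedata} maps to a feasible $(s,\{x_k\})$ for \eqref{sumratedata2} with the same objective value.

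For the converse I would start from a feasible $(s,\{x_k\})$ of \eqref{sumratedata2} and first argue that $s>0$: if $s=0$ then $x_k\le\beta_k P_k s = 0$ forces $x_k=0$ for all $k$, contradicting $\sum_j x_j = 1-s = 1$. Thus $p_d^k := x_k/(s\beta_k)$ is well defined and nonnegative. Then I would verify consistency: $\sum_j\beta_j p_d^j = (\sum_j x_j)/s = (1-s)/s$, so $1+\sum_j\beta_j p_d^j = 1/s$, i.e.\ the $s$ appearing in \eqref{sumratedata2} is exactly the denominator-reciprocal of the recovered $\{p_d^k\}$; the constraint $x_k\le\beta_k P_k s$ gives $p_d^k\le P_k$; and the objective again equals $\sum_k w_k\log_2(1+a_k x_k)$. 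This establishes the equivalence and the recovery formula $p_d^k = x_k/(s\beta_k)$.

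Finally I would note that \eqref{sumratedata2} is convex: each map $x_k\mapsto w_k\log_2(1+a_k x_k)$ is concave (as $w_k>0$ and $a_k\ge 0$), so the objective is concave, while the constraints $x_k\le\beta_k P_k s$, $x_k\ge 0$ and $\sum_j x_j = 1-s$ are all linear in $(s,\{x_k\})$, so the feasible set is a polytope. I expect the only delicate point to be the converse direction — one must rule out $s=0$ before inverting the change of variables, and then check that the recovered $p_d^k$ are mutually consistent (that the free variable $s$ genuinely coincides with $1/(1+\sum_j\beta_j p_d^j)$); everything else is routine substitution.
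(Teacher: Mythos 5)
Your proof is correct and follows essentially the same route as the paper: the substitution $s=1/(1+\sum_j\beta_j p_d^j)$, $x_k=\beta_k p_d^k s$ is exactly the change of variables the paper uses, and your added care (ruling out $s=0$ in the converse, checking consistency of the recovered $p_d^k$, and verifying concavity of the objective over the linear constraints) simply makes explicit what the paper's brief proof leaves implicit.
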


\begin{proof} First we observe that the denominator of the SINR expression in the objective function of \eqref{sumratedata} is the same for every user. It is possible for us to apply the following variable substitutions:
\begin{enumerate}
\item $x_k=\frac{\beta_k p_d^k}{1+\sum_j \beta_j p_d^j}$;
\item $s=\frac{1}{1+\sum_j \beta_j p_d^j}$, or equivalently, $s=1-\sum_j x_j$.
\end{enumerate}
The individual power constraints are changed proportionally.
\end{proof}

Since problem \eqref{sumratedata2} is convex and Slater's condition is always satisfied, standard convex solvers can handle this problem. Moreover we observe  that Theorem \ref{virtual_wf} transforms the problem into a power allocation of virtual parallel channels with individual and sum power constraints. This problem has a water-filling structure when $s$ is fixed. Therefore we investigate the Karush-Kuhn-Tucker (KKT) conditions and obtain the following solution structure which enable us to develop dedicated algorithms that are more efficient than applying standard interior point methods. The results are summarized in the following theorem:

\begin{theorem}\label{v_wf}
The optimal power allocation to the virtual parallel channel \eqref{sumratedata2} satisfies the following equations:
\begin{enumerate}
\item $x_k=\min \left(\beta_k P_k s, \max \left(\frac{w_k}{\nu}-\frac{1}{a_k}\right)^{+}\right),~\forall k$,
\item $\sum_{j=1}^K x_j=1-s$,
\item $\nu=\sum_{j=1}^K \beta_j P_j \left(\frac{w_j}{\frac{1}{a_j}+x_j}-\nu\right)^{+}$,
\end{enumerate}
where $(z)^{+}=\max (z,0)$ for any real number $z$. When $s$ is fixed, the first two conditions are sufficient. Moreover, when
\begin{equation}\label{fullop}
\sum_{j=1}^K \beta_jP_j \left(\frac{w_j}{\frac{1}{a_j}+\frac{\beta_jP_j}{1+\sum_{j'} \beta_{j'} P_{j'}}}\right) \leq \min_k \frac{(1+\sum_j\beta_jP_j)w_k}{\frac{\beta_kP_k}{1+\sum_{j} \beta_{j} P_{j}}+\frac{1}{a_k}},
\end{equation}
then it is optimal to let every user use full power.
\end{theorem}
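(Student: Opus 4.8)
The plan is to exploit the fact, already established above, that \eqref{sumratedata2} is a concave maximization over an affine feasible set with Slater's condition satisfied, so that the KKT conditions are both necessary and sufficient for global optimality. First I would form the Lagrangian of \eqref{sumratedata2}, attaching a multiplier $\mu_k \ge 0$ to each cap constraint $x_k \le \beta_k P_k s$, a multiplier $\eta_k \ge 0$ to each $x_k \ge 0$, and a sign-free multiplier $\nu$ to the equality $\sum_j x_j = 1-s$; note $s \ge 0$ is automatically enforced by $0 \le x_k \le \beta_k P_k s$ and needs no multiplier. Stationarity in $x_k$ yields $w_k/(1/a_k + x_k) = \nu + \mu_k - \eta_k$ (the $\ln 2$ from differentiating $\log_2$ is absorbed by rescaling all multipliers), and stationarity in $s$ yields $\nu = \sum_k \mu_k \beta_k P_k$. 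One also checks $\nu > 0$: if $\nu=0$ then every $\mu_k=0$ and stationarity gives $w_k/(1/a_k+x_k) = -\eta_k \le 0$, a contradiction.

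Second, I would run the standard three-way case split driven by complementary slackness on each coordinate: if $0 < x_k < \beta_k P_k s$ then $\mu_k = \eta_k = 0$ and $x_k = w_k/\nu - 1/a_k$; if $x_k = 0$ then $\mu_k = 0$, $\eta_k \ge 0$ and $w_k/\nu - 1/a_k \le 0$; if $x_k = \beta_k P_k s$ then $\eta_k = 0$, $\mu_k \ge 0$ and $w_k/\nu - 1/a_k \ge \beta_k P_k s$. These collapse into condition~1, $x_k = \min(\beta_k P_k s, (w_k/\nu - 1/a_k)^+)$, while condition~2 is merely primal feasibility of the equality. For condition~3, I would observe that $\mu_k = (w_k/(1/a_k + x_k) - \nu)^+$ in all three cases (it is zero in the first two, and equals $w_k/(1/a_k+x_k)-\nu \ge 0$ in the third), and substitute into $\nu = \sum_k \mu_k \beta_k P_k$. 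The ``$s$ fixed'' clause follows immediately: freezing $s$ turns \eqref{sumratedata2} into a box-constrained water-filling problem whose KKT system is exactly conditions~1 and~2 with $\nu$ the multiplier of $\sum_j x_j = 1-s$, and convexity makes them sufficient.

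Third, for the full-power statement I would test the candidate $x_k = \beta_k P_k s$ for all $k$ with $s = 1/(1+\sum_j \beta_j P_j)$ --- the image of $p_d^k = P_k$ under the substitution of Theorem~\ref{virtual_wf}, which is primal feasible since $\sum_k \beta_k P_k s = 1-s$. With every cap active, $\eta_k = 0$ and $\mu_k = w_k/(1/a_k + x_k) - \nu$; forcing consistency with $\nu = \sum_k \mu_k \beta_k P_k$ pins down $\nu = \left(\sum_k \beta_k P_k \frac{w_k}{1/a_k + x_k}\right)\big/\left(1 + \sum_k \beta_k P_k\right)$, and the only remaining KKT requirement is $\mu_k \ge 0$ for every $k$, i.e.\ $\nu \le \min_k w_k/(1/a_k + x_k)$. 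Substituting $1 + \sum_j \beta_j P_j = 1/s$ and $x_k = \beta_k P_k s$ turns this into precisely \eqref{fullop}, and sufficiency of KKT then gives global optimality of full power. The only part that is not routine Lagrangian calculus is the bookkeeping in the case split: making the $\min(\cdot,(\cdot)^+)$ form drop out cleanly, and in particular confirming that $\mu_k = (w_k/(1/a_k+x_k)-\nu)^+$ holds in \emph{every} case so that condition~3 is genuinely implied rather than merely consistent with the KKT system.
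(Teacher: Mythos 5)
Your proposal is correct and follows essentially the same route as the paper: both write the KKT conditions of the convex problem \eqref{sumratedata2} (necessary and sufficient by Slater), identify conditions 1--2 as stationarity/complementary slackness in $x_k$ plus primal feasibility, obtain condition 3 from stationarity in $s$ combined with $\lambda_k=\bigl(w_k/(1/a_k+x_k)-\nu\bigr)^{+}$, and derive \eqref{fullop} by checking when the all-caps-active point $x_k=\beta_kP_ks$ with $s=1/(1+\sum_j\beta_jP_j)$ (i.e.\ full power) admits a consistent multiplier $\nu$. The only difference is cosmetic: the paper additionally analyzes the sign of $f(s)$ at the endpoints to justify the bisection in Algorithm \ref{vwf}, which is not needed for the theorem statement itself.
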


\begin{proof}
The proof can be found in Appendix \ref{proofvwf}.
\end{proof}

With Theorem \ref{v_wf} we develop an efficient algorithm to obtain the optimal power allocation. For fixed $s$ the optimal $x_k$ can be obtained via modified water-filling. Next we apply bisection on $s$ to find the optimal $s$ such that condition (3) in Theorem \ref{v_wf} is satisfied. The use of bisection needs to be justified and is also provided in the appendix. We only need to search for $s\in [\frac{1}{1+\sum_j\beta_jP_j},1]$ since this is an implicit constraint from the definition. The $s$ that solves the problem is such that $f(s)\triangleq \sum_{j=1}^K \beta_j P_j \left(\frac{1}{\frac{1}{a_j}+x_j}-v\right)^{+}-\nu=0$. As a by-product we also get the condition when it is optimal to for everyone to use full power. The procedure of finding the optimal power control parameters are described in Algorithm \ref{vwf}.

\alglanguage{pascal}
\begin{algorithm}
 \begin{algorithmic}[1]
 \State Initialize $s_l=\frac{1}{1+\sum_j\beta_jP_j}$ and $s_u=1$. Check if \eqref{fullop} is satisfied, if yes then terminate and output $p_d^k=P_k$.
 Otherwise compute $s=(s_l+s_u)/2$.
 \Repeat
 \State solve for $x_k$ and $v$ satisfying conditions (1) and (2) in Theorem \ref{v_wf}
 \State \textbf{if} $f(s)>0$
 \State $s_u=s$, $s_l$ remains unchanged
 \State $s\leftarrow(s_u+s_l)/2$
 \State \textbf{else}
 \State $s_l=s$, $s_u$ remains unchanged
 \State $s\leftarrow(s_u+s_l)/2$
 \Until \bf{convergence} with $|s_u-s_l|<\epsilon$
 \State \textbf{return} all $p_d^k=\frac{x_k}{\beta_ks}~\forall k$
 \end{algorithmic}
 \caption{Virtual Water-Filling Algorithm for \eqref{sumratedata2}}
 \label{vwf}
\end{algorithm}

\subsubsection{Joint Pilot and Data Power Optimization}
Next we extend the method to the case of joint power control over pilot and data power. The problem can be written as follows:
\begin{equation}\label{sumratejoint}
\begin{aligned}
& \underset{\{p_d^k\},~\{p_p^k\}}{\text{maximize}}
& &  \sum_k w_k\log_2\left(1+\frac{Mp_d^k \gamma_k }{1+\sum_{j= 1}^K \beta_j p_d^j}\right)   \\
& \text{subject to}
& & \tau_p p_{p}^k+(T-\tau_p) p_d^k \leq E_{k}, \forall k, \\
& & & p_d^k\geq0,~p_p^k\geq0, \forall k.
\end{aligned}
\end{equation}
Since $\gamma_k$ depends on $p_p^k$ which is also an optimization variable, the problem is non-convex. However we find out that the tools we developed for the max-min problem help us here as well. More specifically, we make use of Proposition \ref{pusolution} with $w_k=1~\forall k$. Define $x_k=p_d^k\gamma_k$ as the SP of user $k$, then we use Proposition \ref{pusolution} to make a change of variables in \eqref{sumratejoint} and use the same techniques as in the case of optimizing data power only. We obtain the following theorem:

\begin{theorem}\label{summrc}Problem \eqref{sumratejoint} can be reformulated into the following form:
\begin{equation}\label{sumratejoint2}
\begin{aligned}
& \underset{~s,~\{y_k\}}{\mathrm{maximize}}
& &  \sum_k w_k\log_2\left(1+M y_k \right)   \\
& \mathrm{subject~to}
& & \sum_{j= 1}^K \beta_jq(y_j,s) \leq 1-s,
\end{aligned}
\end{equation}
where $q(y_j,s)$ is defined in \eqref{qdef} on top of next page. The two formulations are equivalent in the sense that they have the same optimal objective values, and the solution to \eqref{sumratejoint} can be obtained from solution to \eqref{sumratejoint2} via $p_d^k=q(y_k,s)/s$. Moreover problem \eqref{sumratejoint2} is jointly convex in $s$ and $y_k$.
\end{theorem}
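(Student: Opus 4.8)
The plan is to carry out the same two-stage reduction that worked for the max-min problem and for the data-only sum-rate problem in Theorem~\ref{virtual_wf}, and then to read off convexity from a perspective argument. First I would invoke Lemma~\ref{lemma1} and Theorem~\ref{theorem1} to set $\tau_p=K$ and make the energy constraint tight, so that each user's pair $(p_p^k,p_d^k)$ is described by the single variable $p_d^k$ via $p_p^k=(E_k-(T-K)p_d^k)/K$. Next, following the reasoning behind Proposition~\ref{pusolution} with $w_k=1$, I would argue that it is without loss of optimality to take, for a prescribed signal power $x_k:=p_d^k\gamma_k$, the smaller-root value of $p_d^k$ given by \eqref{root}: the numerators $Mx_k$ of all the SINR terms are then fixed while the common denominator $1+\sum_j\beta_jp_d^j$ is strictly increasing in every $p_d^j$, so shrinking each $p_d^k$ to the least value compatible with $x_k$ can only lower the denominator and hence raise every SINR and the objective (and one checks $p_d^k\le E_k/(T-K)$, so $p_p^k\ge 0$ remains feasible, exactly as in Proposition~\ref{pusolution}). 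Writing $p_d^k=g(x_k)$ for that root, \eqref{sumratejoint} becomes a maximization over the signal powers $\{x_k\}$ alone, each confined to the interval $[0,x_k^{\max}]$ on which the expression under the square root in \eqref{root} is nonnegative.

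Then I would apply the substitution from Theorem~\ref{virtual_wf}: put $s=1/(1+\sum_j\beta_jg(x_j))\in(0,1]$ and $y_k=s\,x_k$, so that $My_k=\text{SINR}_k$. Since $g$ is the smaller-root branch of the inverse of $p_d\mapsto p_d\gamma_k$, we have $p_d^k=g(y_k/s)$, and the defining identity $s=1/(1+\sum_j\beta_jp_d^j)$ rewrites as $\sum_j\beta_j\,s\,g(y_j/s)=1-s$, i.e. $\sum_j\beta_j q(y_j,s)=1-s$ with $q(y,s):=s\,g(y/s)$, which is exactly the function in \eqref{qdef}. Relaxing this equality to $\sum_j\beta_j q(y_j,s)\le 1-s$ costs nothing because $q(y,s)$ is increasing in $y$ on the relevant branch, so at the optimum it holds with equality (otherwise some $y_k$ could be raised, which only improves the objective); conversely, from any feasible $(s,\{y_k\})$ of \eqref{sumratejoint2} the point $p_d^k=q(y_k,s)/s$, $p_p^k=(E_k-(T-K)p_d^k)/K$ is feasible for \eqref{sumratejoint} with $\text{SINR}_k\ge My_k$. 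Matching these two directions gives equal optimal values and the stated solution recovery.

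For the convexity claim, the objective $\sum_k w_k\log_2(1+My_k)$ is a sum of concave functions, so for a maximization it is fine; the content is that the feasible set is convex, for which it suffices that each $q(y_j,s)=s\,g(y_j/s)$ is jointly convex in $(y_j,s)$ for $s>0$. This is the perspective operation, so it reduces to showing $g$ is convex on $[0,x_k^{\max}]$. From \eqref{root} with $w=1$, $g(x)$ is an affine function of $x$ minus $\frac{1}{2(T-K)\beta_k}\sqrt{\phi_k(x)}$ with $\phi_k(x)=(T-K)^2x^2-2(T-K)(E_k\beta_k+2)x+E_k^2\beta_k^2$; a direct computation (as in the proof of Theorem~\ref{snr}) gives discriminant $16(T-K)^2(E_k\beta_k+1)>0$, so $\sqrt{\phi_k}$ has nonpositive second derivative wherever $\phi_k\ge 0$ and is therefore concave on $[0,x_k^{\max}]$, making $-\sqrt{\phi_k}$ — and hence $g$ — convex there. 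Joint convexity of each $q(y_j,s)$ then follows from the perspective property, and together with the affine term $s-1$ and the linear domain constraints $0\le y_k\le s\,x_k^{\max}$ it makes the feasible set of \eqref{sumratejoint2} convex; since the objective is concave, the problem is convex. The step I expect to need the most care is the first reduction — justifying rigorously that restricting to the smaller root of \eqref{root} loses no optimality and keeps $p_p^k\ge 0$ — because once the problem is in the variables $(s,\{y_k\})$ the convexity is essentially the perspective trick applied to a square root whose concavity was already established for Theorem~\ref{snr}.
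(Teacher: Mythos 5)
Your proposal is correct and follows essentially the same route as the paper's proof: reduce to the signal-power variables via Proposition \ref{pusolution} with $w_k=1$, substitute $s=1/(1+\sum_j\beta_j p_d^j)$ and $y_k=s x_k$ (the paper does this through a dummy variable $t=1/s$), relax the resulting equality to an inequality using monotonicity, and establish convexity by noting that $q(y,s)$ is the perspective of the convex root function $r(x)$, whose convexity comes from the same discriminant computation as in Theorem \ref{snr}. Your write-up is merely more explicit in spelling out the converse feasibility direction and the second-derivative check, which the paper leaves as references to earlier results.
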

\begin{figure*}[!t]
\begin{equation}\label{qdef}
q(y_j,s)=\frac{E_{j}\beta_js +(T-K)y_j-\sqrt{E_{j}^2\beta_j^2s^2-2(T-K)(E_{j}\beta_j+2)y_js+(T-K)^2y_j^2}}{2(T-K)\beta_j}.
\end{equation}
\hrulefill
\end{figure*}

\begin{proof} First we introduce a dummy variable $t$ and rewrite \eqref{sumratejoint} as
\begin{equation}
\begin{aligned}
& \underset{~t,~\{p_d^k\},~\{p_p^k\}}{\text{maximize}}
& &  \sum_k w_k\log_2\left(1+\frac{Mp_d^k \gamma_k }{t}\right)   \\
& ~\text{subject to}
& & \tau_p p_{p}^k+(T-\tau_p) p_d^k \leq E_{k}, \forall k, \\
& & & p_d^k\geq0,~p_p^k\geq0 ~\forall k \\
& & & 1+\sum_{j= 1}^K \beta_j p_d^j \leq t.
\end{aligned}
\end{equation}
The last constraint is relaxed from equality to inequality without changing the solutions to the problem. This is because the objective function is monotonically decreasing in $t$, thus at the optimal point the last inequality will always be active.
Next we apply Proposition \ref{pusolution} with $w_k=1~\forall k$. Define $x_k=p_d^k\gamma_k$ as the SP of user $k$ to obtain the following problem
\begin{equation}
\begin{aligned}
& \underset{~t,~\{x_k\}}{\text{maximize}}
& &  \sum_k w_k\log_2\left(1+\frac{Mx_k }{t}\right)   \\
& \text{subject to}
& & 1+\sum_{j= 1}^K \beta_j r(x_j) \leq t,
\end{aligned}
\end{equation}
where $r(x_j)$ is defined in \eqref{rdef} on top of next page.

Finally we apply the variable substitution $y_k=x_k/t$ and $s=1/t$ to obtain \eqref{sumratejoint2}.

From the proof of Theorem \ref{snr} we can deduce that $r(x_j)$ is a convex function in $x_j$. Next we observe that $q(y_j,s)$ is a perspective transformation of $r(x_j)$ and therefore preserve the convexity\cite{BV}. Hence we conclude that \eqref{sumratejoint2} is jointly convex in $y_j$s and $s$.
\end{proof}

\begin{figure*}[!t]
\begin{equation}\label{rdef}
r(x_j)=\frac{E_{j}\beta_j +(T-K)x_j-\sqrt{E_{j}^2\beta_j^2-2(T-K)(E_{j}\beta_j+2)x_j+(T-K)^2x_j^2}}{2(T-K)\beta_j}.
\end{equation}
\hrulefill
\end{figure*}
Since we have the convex reformulation \eqref{sumratejoint2} we can use standard convex solvers to find the optimal solutions efficiently, and the optimal power control parameters can be recovered easily. Here we use the MOSEK solver \cite{MOSEK} with CVX \cite{CVX}

\subsection{Sum SE for ZF}
In the case of perfect CSI, maximizing sum SE for ZF is straightforward. This is because the ZF detector completely removes all the interference from other users and creates $K$ parallel channels. However in the case of imperfect CSI, the interference is reduced but still remains, which makes the sum SE problem at least as difficult as with MRC.  Fortunately, the techniques we developed for solving the MRC case can be applied here to solve the problem to global optimal. Similarly we will first describe the case of optimizing data power only and then extended to joint pilot and data power optimization.

By using Theorem \ref{theorem1}, \eqref{opt} now becomes the following optimization problem:
\begin{equation}\label{sumratezf}
\begin{aligned}
& \underset{\{p_p^k\},~\{p_d^k\}}{\text{maximize}}
& &  \sum_k w_k\log_2\left(1+\frac{(M-K)p_d^k \gamma_k}{1+\sum_{j=1}^K  p_d^j(\beta_j-\gamma_j)}\right)   \\
& \text{subject to}
& & \tau_p p_{p}^k+(T-\tau_p) p_d^k \leq E_{k}, \forall k \\
& & & p_p^k\geq 0, p_d^k\geq0, \forall k.
\end{aligned}
\end{equation}

\subsubsection{Optimizing Data Power}
In the ZF case, we have the following the following problem:
\begin{equation}\label{sumratedatazf}
\begin{aligned}
& \underset{\{p_d^k\}}{\text{maximize}}
& &  \sum_k w_k\log_2\left(1+\frac{(M-K)p_d^k \gamma_k }{1+\sum_{j= 1}^K (\beta_j-\gamma_j) p_d^j}\right)   \\
& \text{subject to}
& & p_d^k \leq P_{k}, \forall k \\
& & & p_d^k\geq0, \forall k. \\
\end{aligned}
\end{equation}
We observe that this problem has exactly the same structure as \eqref{sumratedata2} in the MRC case where only the constant $\beta_j$ changes to $\beta_j-\gamma_j$. Therefore same analysis and algorithm applies here where we substitute all $\beta_j$ with $\beta_j-\gamma_j$.

\subsubsection{Joint Pilot and Data Power Optimization}
Next we extend this result to the case of joint power control over pilot and data power. The problem is as follows:
\begin{equation}\label{sumratejointzf}
\begin{aligned}
& \underset{\{p_d^k\},~\{p_p^k\}}{\text{maximize}}
& &  \sum_k w_k\log_2\left(1+\frac{(M-K)p_d^k \gamma_k }{1+\sum_{j= 1}^K (\beta_j-\gamma_j) p_d^j}\right)   \\
& \text{subject to}
& & \tau_p p_{p}^k+(T-\tau_p) p_d^k \leq E_{k}, \forall k \\
& & & p_d^k\geq0,~p_p^k\geq0, \forall k.\\
\end{aligned}
\end{equation}
The transformation we did in the MRC case can be applied here as well as proved by the following corollary:
\begin{corollary}\label{sumzf} Problem \eqref{sumratejointzf} can be reformulated into the following form:
\begin{equation}\label{sumratejointzf2}
\begin{aligned}
& \underset{~s,~\{y_k\}}{\mathrm{maximize}}
& &  \sum_k w_k\log_2\left(1+(M-K) y_k \right)   \\
& \mathrm{subject~to}
& & \sum_{j= 1}^K \beta_jq(y_j,s)-\sum_{j=1}^K y_j \leq 1-s,
\end{aligned}
\end{equation}
where $q(y_j,s)$ is given in \eqref{qdef} which is the same as in the MRC case. The two formulations are equivalent in the sense that they have the same optimal objective values, and the solution to \eqref{sumratejointzf} can be obtained from solution to \eqref{sumratejointzf2} via $p_d^k=q(y_k,s)/s$. Moreover problem \eqref{sumratejointzf2} is jointly convex in $s$ and $y_k$.
\end{corollary}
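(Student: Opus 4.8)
The plan is to mirror, step by step, the proof of Theorem \ref{summrc}, exploiting that the ZF SINR in \eqref{sinrzf} differs from the MRC SINR in \eqref{sinrmrc} only by replacing $M$ with $M-K$ and by the extra subtracted term $\sum_{j}\gamma_j p_d^j$ in the denominator. First I would introduce a dummy variable $t$ and rewrite \eqref{sumratejointzf} with objective $\sum_k w_k\log_2(1+(M-K)p_d^k\gamma_k/t)$ together with the added constraint $1+\sum_{j}(\beta_j-\gamma_j)p_d^j\le t$. Since the objective is strictly decreasing in $t$, this constraint is active at optimality, so relaxing it from equality to inequality does not change the solution set, exactly as in Theorem \ref{summrc}.

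Next I would invoke Proposition \ref{pusolution} with $w_k=1$ for all $k$, set $x_k=p_d^k\gamma_k$ as the signal power, and substitute $p_d^k=r(x_k)$ with $r$ from \eqref{rdef}. The only new bookkeeping relative to the MRC case concerns the denominator: $(\beta_j-\gamma_j)p_d^j=\beta_j r(x_j)-x_j$, so the constraint becomes $1+\sum_j\beta_j r(x_j)-\sum_j x_j\le t$. Then, applying $y_k=x_k/t$ and $s=1/t$ and multiplying the constraint through by $s$ turns $r$ into its perspective $q(y_j,s)=s\,r(y_j/s)$, which is precisely \eqref{qdef}; this yields the constraint $\sum_j\beta_j q(y_j,s)-\sum_j y_j\le 1-s$ and objective $\sum_k w_k\log_2(1+(M-K)y_k)$, i.e. \eqref{sumratejointzf2}, with recovery rule $p_d^k=q(y_k,s)/s$. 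For convexity, $r(\cdot)$ is convex by the same second-derivative computation used in the proof of Theorem \ref{snr} (the square-root term is concave, hence its negation convex, plus a linear term), so its perspective $q(y_j,s)$ is jointly convex in $(y_j,s)$ \cite{BV}; thus $\sum_j\beta_j q(y_j,s)$ is convex, the added $-\sum_j y_j$ and $-(1-s)$ terms are affine, the feasible set is convex, and the objective is concave in the $y_k$, so \eqref{sumratejointzf2} is a convex program whose optimum maps back to an optimum of \eqref{sumratejointzf}.

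The step I would state most carefully — really the only place the argument is not a verbatim transcription of the MRC proof — is the re-justification of Proposition \ref{pusolution} in the ZF setting: one must check that, for a fixed signal power $x_k$, the smaller root in \eqref{root} is still the right choice. This follows because $\gamma_k=\tau_p p_p^k\beta_k^2/(1+\tau_p p_p^k\beta_k)<\beta_k$ always holds, so decreasing $p_d^k$ (while raising $p_p^k$ to keep $x_k$ fixed) strictly decreases $(\beta_k-\gamma_k)p_d^k=\beta_k p_d^k-x_k$, hence shrinks the common denominator and raises $\mathrm{SINR}_k$ for every $k$; the numerators depend only on the $x_k$'s. Once this is in place, the subtracted interference term $\sum_j\gamma_j p_d^j$ collapses to the affine term $\sum_j y_j$ under the substitution and therefore cannot disturb convexity, so no genuine obstacle remains beyond routine algebra.
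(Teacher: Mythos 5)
Your proposal is correct and follows essentially the same route as the paper, which simply states that the ZF proof mirrors the MRC proof of Theorem \ref{summrc} with $\beta_j$ replaced by $\beta_j-\gamma_j$ and omits the details. You fill in precisely the omitted bookkeeping — the identity $(\beta_j-\gamma_j)p_d^j=\beta_j r(x_j)-x_j$ turning the extra term into the affine $-\sum_j y_j$, and the re-check that the smaller root in \eqref{root} remains optimal since $\gamma_k<\beta_k$ — and both steps are handled correctly.
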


\begin{proof}
The only difference compared with the case of MRC is that $\beta_j$ changes to $\beta_j-\gamma_j$ in all expressions. The proof is similar to the case of MRC, and is omitted here for brevity.
\end{proof}

\section{Extension to Correlated Fading Channels}\label{correlated}

In this section, we extend our results to case of correlated fading
channels. We only consider weighted max-min fairness for MRC here, to
exemplify how our techniques in the previous sections apply to other
channel models. The other cases are left for future work.

For the correlated fading channels, we model $g_k\sim CN(0, \bR_k)$
where the covariance matrix $\bR_k$ characterizes the spatial
correlation. The large-scale fading is the same for all antennas so
all diagonal entries are equal to $\beta_k$.  The MMSE channel estimation
requires the storage of the entire matrix $\bR_k$ for every user,  and the estimation
requires the inversion of large matrices -- which has a high associated complexity. To
avoid this complexity, we adopt the element-wise MMSE estimator
proposed in \cite{BG2006}. During the training phase, the BS receives
the pilot signals, correlates them with pilot sequence of user $k$ and
obtains
\begin{equation}
\by_k=\sqrt{\tau_p p_p^k}\bg_k + \bn_p, \quad k=1,\ldots,K.
\end{equation}
The estimate is then
\begin{equation}
\hat{\bg}_k=\frac{\sqrt{\tau_p p_p^k}\beta_k}{1+\tau_p p_p^k\beta_k} \by_k,\quad k=1,\ldots,K.
\end{equation}
This estimate, $\hat{\bg}_k$, is for linear detection of data from
user $k$. With this channel model and estimation method, we obtain the
following achievable SE:

\begin{lemma}\label{correlatedse}
The capacity of user $k$ with MRC detection under correlated fading and
element-wise MMSE estimation is lower bounded by the achievable ergodic SE
\begin{equation}
R_k^{corr}=\left(1-\frac{\tau_p}{T}\right)\log_2(1+\mathrm{SINR}_k^{corr})
\end{equation}
where pilot and payload powers are arbitrary,
\begin{equation}\label{sinrmrccor}
\mathrm{SINR}_k^{corr}=\frac{Mp_d^k \gamma_k }{1+\sum_{j= 1}^K \mathrm{tr}(\bR_j\bR_k) p_d^j \frac{\gamma_k}{M\beta_k^2}+\sum_{j=1}^K p_d^j \frac{\beta_j}{1+\tau_pp_p^k\beta_k}}
\end{equation}
and $\gamma_k=\frac{\tau_p p_p^k \beta_k^2}{1+\tau_p p_p^k \beta_k}$.
\end{lemma}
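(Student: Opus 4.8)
The plan is to use the standard ``use-and-then-forget'' bounding technique of \cite{YM2013}, now carried out for the correlated channel $\bg_k\sim CN(\bzero,\bR_k)$ and the element-wise MMSE estimator. After MRC, the decision statistic for user $k$ is $\hat{\bg}_k\H\by$; retaining only the deterministic gain $\E[\hat{\bg}_k\H\bg_k]$ as the useful coefficient and treating the channel-gain fluctuation $\hat{\bg}_k\H\bg_k-\E[\hat{\bg}_k\H\bg_k]$, the multiuser interference $\{\hat{\bg}_k\H\bg_j\}_{j\neq k}$ and the post-combining noise $\hat{\bg}_k\H\bn$ as worst-case uncorrelated Gaussian noise, one obtains $R_k^{corr}=(1-\tau_p/T)\log_2(1+\mathrm{SINR}_k^{corr})$, the prelog being the usual pilot-overhead penalty, with
\[
\mathrm{SINR}_k^{corr}=\frac{p_d^k\,|\E[\hat{\bg}_k\H\bg_k]|^2}{\sum_{j=1}^K p_d^j\,\E[|\hat{\bg}_k\H\bg_j|^2]-p_d^k\,|\E[\hat{\bg}_k\H\bg_k]|^2+\E[\|\hat{\bg}_k\|^2]}.
\]
The proof then reduces to evaluating the three moments appearing here and matching the outcome to \eqref{sinrmrccor}.

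Writing $\hat{\bg}_k=c_k(\sqrt{\tau_p p_p^k}\,\bg_k+\bn_p^k)$ with $c_k=\frac{\sqrt{\tau_p p_p^k}\,\beta_k}{1+\tau_p p_p^k\beta_k}$ and $\bn_p^k\sim CN(\bzero,\bI)$ the noise seen after correlating the received pilots with user $k$'s sequence, orthogonality of the $K$ pilot sequences makes the $\{\bn_p^k\}$ mutually independent, so $\hat{\bg}_k$ is a function of $(\bg_k,\bn_p^k)$ only and is independent of $\bg_j$ for $j\neq k$. I will repeatedly use $\mathrm{tr}(\bR_k)=M\beta_k$ (equal diagonal entries) and $\E[\hat{\bg}_k\hat{\bg}_k\H]=c_k^2(\tau_p p_p^k\bR_k+\bI)$.

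The first two moments are immediate: $\E[\hat{\bg}_k\H\bg_k]=c_k\sqrt{\tau_p p_p^k}\,\mathrm{tr}(\bR_k)=M\gamma_k$, giving numerator $p_d^k M^2\gamma_k^2$, and $\E[\|\hat{\bg}_k\|^2]=c_k^2(\tau_p p_p^k\,\mathrm{tr}(\bR_k)+M)=M\gamma_k$, which becomes the constant $1$ after normalisation. For $j\neq k$, independence gives $\E[|\hat{\bg}_k\H\bg_j|^2]=\mathrm{tr}(\bR_j\,\E[\hat{\bg}_k\hat{\bg}_k\H])=c_k^2(\tau_p p_p^k\,\mathrm{tr}(\bR_j\bR_k)+M\beta_j)$. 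The delicate term is $j=k$: expanding $|\hat{\bg}_k\H\bg_k|^2$, all terms coupling $\bg_k$ with $\bn_p^k$ have zero mean, and the Gaussian fourth-moment fact $\E[(\bg_k\H\bg_k)^2]=(\mathrm{tr}\,\bR_k)^2+\mathrm{tr}(\bR_k^2)$ yields $\E[|\hat{\bg}_k\H\bg_k|^2]=c_k^2(\tau_p p_p^k(M^2\beta_k^2+\mathrm{tr}(\bR_k^2))+M\beta_k)$. The key point is that the leading rank-one piece $c_k^2\tau_p p_p^k M^2\beta_k^2$ equals $M^2\gamma_k^2=|\E[\hat{\bg}_k\H\bg_k]|^2$ and hence cancels exactly in the denominator, leaving $c_k^2(\tau_p p_p^k\,\mathrm{tr}(\bR_k^2)+M\beta_k)$, i.e.\ precisely the $j=k$ instance of the $j\neq k$ formula since $\mathrm{tr}(\bR_k^2)=\mathrm{tr}(\bR_k\bR_k)$.

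Substituting the four moments and factoring $M\gamma_k$ out of the denominator finishes the proof: with $c_k^2\tau_p p_p^k=\gamma_k^2/\beta_k^2$ the $\mathrm{tr}(\bR_j\bR_k)$ terms (over all $j$) collect into $\sum_j p_d^j\,\mathrm{tr}(\bR_j\bR_k)\,\frac{\gamma_k}{M\beta_k^2}$; with $c_k^2 M\beta_j=M\gamma_k\beta_j/(1+\tau_p p_p^k\beta_k)$ the $\beta_j$ terms collect into $\sum_j p_d^j\,\frac{\beta_j}{1+\tau_p p_p^k\beta_k}$; and $\E[\|\hat{\bg}_k\|^2]/(M\gamma_k)=1$. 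This is exactly \eqref{sinrmrccor}. I expect the $j=k$ fourth-moment step to be the main obstacle: one must expand the modulus carefully, confirm that every cross term between $\bg_k$ and $\bn_p^k$ vanishes in mean, and verify the exact cancellation of the $M^2\gamma_k^2$ term. This is the one place where the spatial correlation $\bR_k\neq\beta_k\bI$ genuinely changes the SINR relative to Lemma~\ref{mrcse}, and any bookkeeping slip there propagates into the whole expression.
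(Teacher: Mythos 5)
Your proposal follows essentially the same route as the paper: the use-and-forget bound of \cite{JAMV2011} giving $\mathrm{SINR}_k^{corr}=\frac{|\E[\hat{\bg}_k\H\bg_k]|^2p_d^k}{\sum_j \E[|\hat{\bg}_k\H\bg_j|^2]p_d^j-|\E[\hat{\bg}_k\H\bg_k]|^2p_d^k+\E[\|\hat{\bg}_k\|^2]}$, followed by evaluation of the moments. Your moment calculations (including the fourth-moment term and the exact cancellation of $M^2\gamma_k^2$) are correct and reproduce \eqref{sinrmrccor}, so your write-up simply makes explicit the "standard results" the paper leaves to the reader.
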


\begin{proof}
  The proof is given in Appendix \ref{proofcorr}.
\end{proof}

We observe that Theorem \ref{theorem1} for the optimal training length
can be easily extended to cover this case, and therefore the
optimization problem we are interested to solve is:
\begin{equation}\label{maxmincor}
\begin{aligned}
& \underset{\{p_p^k\},~\{p_u^k\}}{\text{maximize}}
& & \underset{k}{\text{min}} ~w_k\mathrm{SINR}_k^{corr}   \\
& \text{subject to}
& & \tau_p p_{p}^k+(T-\tau_p) p_u^k \leq E_{k}, \forall k \\
& & & p_p^k\geq 0, p_u^k\geq0, \forall k.
\end{aligned}
\end{equation}
The epigraph form of \eqref{maxmincor} is
\begin{equation}\label{maxmincor2}
\begin{aligned}
& \underset{\{p_p^k\},\{p_d^k\},~\lambda}{\text{maximize}}
& &  \lambda  \\
& ~\text{subject to}
& & w_kMp_d^k p_p^k \beta_k^2 \tau_p \geq\\
&&& \lambda(1+\tau_p \beta_k p_p^k+ \sum_{j=1}^K \beta_j p_d^j+ \\
&&& \tau_p p_p^k\sum_{j= 1}^K \mathrm{tr}(\bR_j\bR_k) p_d^j \frac{1}{M}), \forall k\\
& & & \tau_p p_{p}^k+(T-\tau_p) p_d^k \leq E_{k}, \forall k \\
& & & p_p^k\geq 0, p_d^k\geq0, \forall k.
\end{aligned}
\end{equation}
We recognize \eqref{maxmincor2} as a GP and therefore it can be solved
efficiently, using general purpose solvers.

\color{black}
\section{Simulation Results and Discussion}\label{simulation}
In this section we present simulation results to demonstrate the benefits of our algorithms and compare the performance with the case of no power control (i.e.,  full equal power) as well as the case of power control on the payload power only (and full power pilots).  We consider a scenario with $M=100$ antennas, $K_0=10$ users, and the length of the coherence interval is $T=200$ (which for example corresponds to a coherence bandwidth of $200$ kHz and a coherence time of $1$ ms). The users are assumed to be uniformly and randomly distributed in a cell with radius $R=1000$ m and no user is closer to the BS than $100$ m. The path-loss model is chosen as $\beta_k=z_k/r_k^{3.76}$ where $r_k$ is the distance of user $k$ from the BS where $z_k$ represents the independent shadowing effect. Shadowing is chosen to be log-normal distributed with a standard deviation of $8$ dB. Due to the long tail behavior of the log-normal distribution there could be some users with very small $\beta_k$, therefore in each snapshot the user with the smallest $\beta_k$ is dropped from service. Therefore the algorithm is run for $K=K_0-1=9$ users.

The energy budgets $E_{k}=10^{-0.5}\times R^{3.76} \times T$ and $E_{k}=10^{0.5}\times R^{3.76} \times T$ give a median SNR of $-5$ dB and $5$ dB at the cell edge when using equal power allocation. The weights $w_k$ are set to be equal in all the simulations. The algorithms are run for $1000$ Monte-Carlo simulations where in each snapshot the users are dropped randomly in the cell so that the large-scale fading $\beta_k$ changes.

\subsection{Max-Min SE Results}
We compare 4 schemes: 1) the solution to problem \eqref{maxmin} (marked as `Max-min' in the figures); 2) equal power allocation $p_d^k=p_p^k=E_{k}/T$ (marked as `Equal Power' in the figures); 3) optimizing only payload power for problem \eqref{maxmin} by fixing $p_p^k=E_{k}/T$ (marked as `Max-min (data)' in the figures); 4) the scheme that maximizes the sum SE is presented as well for reference (marked as `sum' in the figures). The same schemes are tested for both MRC and ZF, and low and high SNR scenarios.

\begin{figure*}[!t]
\begin{center}
\includegraphics[width=\textwidth]{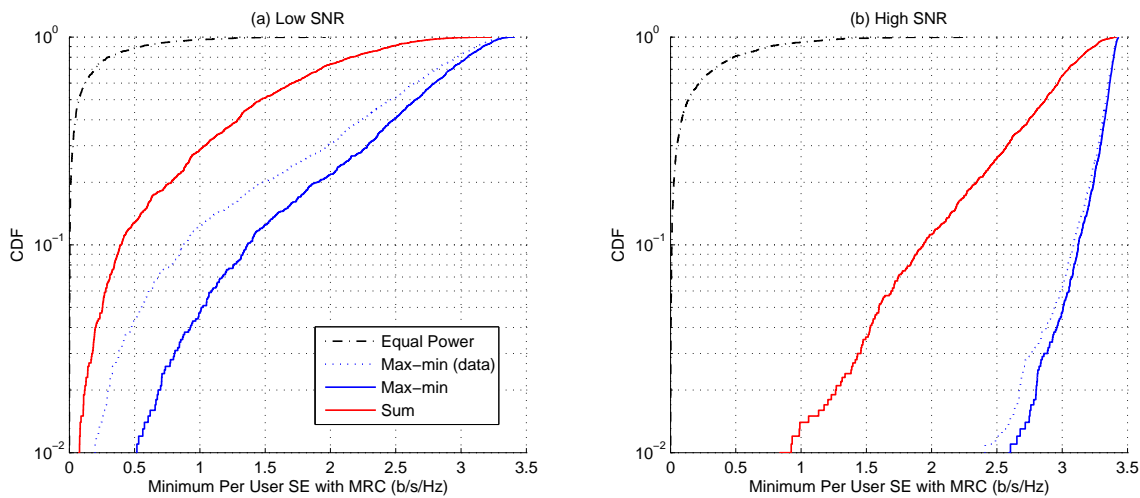}
\caption {\label{minratemrc} CDF of the minimum SE with $M=100$, $K_0=10$, $T=200$, $R=1000$ m for MRC. Subplots (a) and (b) correspond to low SNR ($-5$ dB) and high SNR ($5$ dB) at the cell edge, respectively.}
\end{center}
\end{figure*}

\begin{figure*}[t]
\begin{center}
\includegraphics[width=\textwidth]{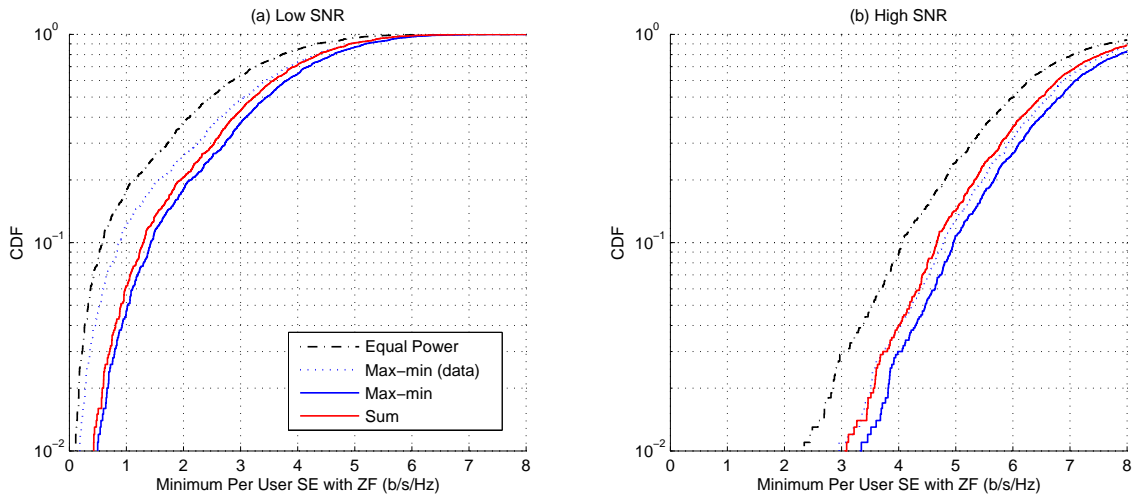}
\caption {\label{minratezf} CDF of the minimum SE with $M=100$, $K_0=10$, $T=200$, $R=1000$ m for ZF. Subplots (a) and (b) correspond to low SNR ($-5$ dB) and high SNR (5 dB) at the cell edge, respectively.}
\end{center}
\end{figure*}

In Figure \ref{minratemrc} (a) and (b) we plot the cumulative distribution function (CDF) of the minimum SE over different snapshots of user locations for MRC at low and high SNR respectively. We observe that without any power control in almost all of the cases the user with the lowest SNR will get less than $0.5$ bit/s/Hz in both low and high SNR scenarios. This is not acceptable if we want to provide decent quality of service to every user being served. With max-min power control for both pilot and data we resolve this problem by guaranteeing the users an SE of more than $1$ bit/s/Hz with $0.95$ probability and $2.75$ bit/s/Hz with $0.5$ probability. In low SNR scenarios the joint optimization doubles the $0.95$ likely point, from $0.5$ to $1$ bit/s/Hz, which proves the need of joint pilot and data power optimization at low SNR. In this case with data power control the user with the worst channel would have poor channel estimates that limits the SE, while with joint power control they borrow power from the data part to enhance channel estimation and thereby increase the SE. However in the high SNR scenarios the gain is marginal by the joint optimization, power control over data is enough. This is because the channel estimates are already good enough for linear detection. The performance of the sum SE formulation is not surprising as it is not designed for improving the minimum SE. It boosts the SE of the users with better channels to increase the sum SE, which in turn scarifies the users with worse channels.

In Figure \ref{minratezf} (a) and (b) we plot the CDF of the minimum SE over different snapshots of user locations for ZF at low and high SNR respectively. We observe that all schemes perform similarly and the gains from joint power control with respect to only power control over data are not as large as in the case of MRC. This is because with ZF most interference is removed by the detector, however in low SNR scenarios joint power control is still necessary as it increases the $0.95$ likely point from $0.5$ to $1$ bit/s/Hz compared to power control over data only. The performance of the sum SE formulation is surprisingly good at both low and high SNR and is even better than the max-min scheme with only data power control. This suggests that with ZF detector we can go for the sum SE formulation and push up the total system throughput without sacrificing much of the worse users' performance.

\subsection{Sum SE Results}
We compare $4$ schemes: 1) the scheme that maximizes the sum SE (marked as `Sum' in the figures); 2) equal power allocation $p_d^k=p_p^k=E_{k}/T$ (marked as `Equal Power' in the figures); 3) optimizing the data power only for sum SE by fixing $p_p^k=E_{k}/T$ (marked as `Sum (data)' in the figures); 4) the max-min scheme is also presented for reference (marked as `max-min' in the figures). The same schemes are tested for both MRC and ZF.
\begin{figure*}[!t]
\begin{center}
\includegraphics[width=\textwidth]{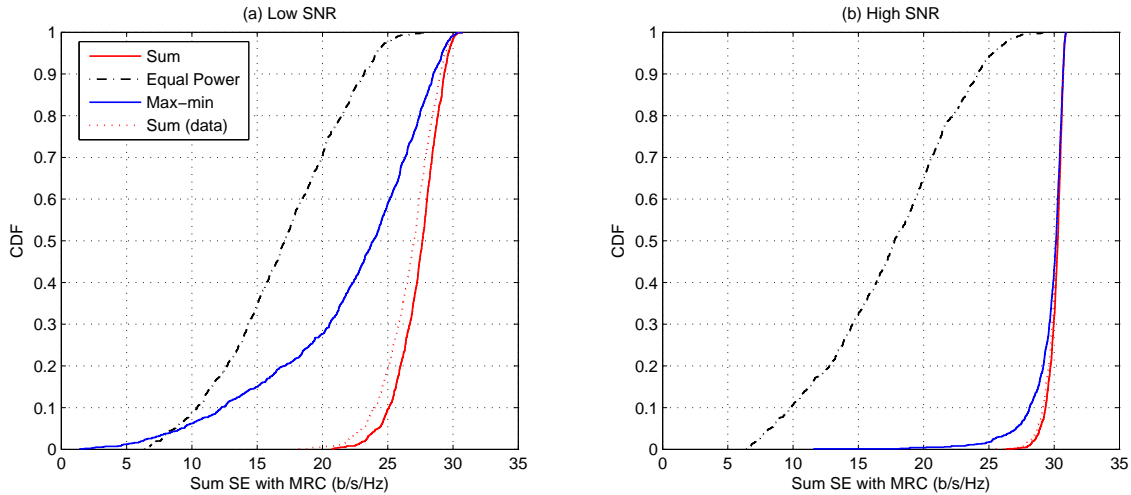}
\caption {\label{sumratemrcplot}CDF of the sum SE with $M=100$, $K_0=10$, $T=200$, $R=1000$ m for MRC. Subplots (a) and (b) correspond to low SNR ($-5$ dB) and high SNR ($5$ dB) at the cell edge, respectively.}
\end{center}
\end{figure*}

\begin{figure*}[!t]
\begin{center}
\includegraphics[width=\textwidth]{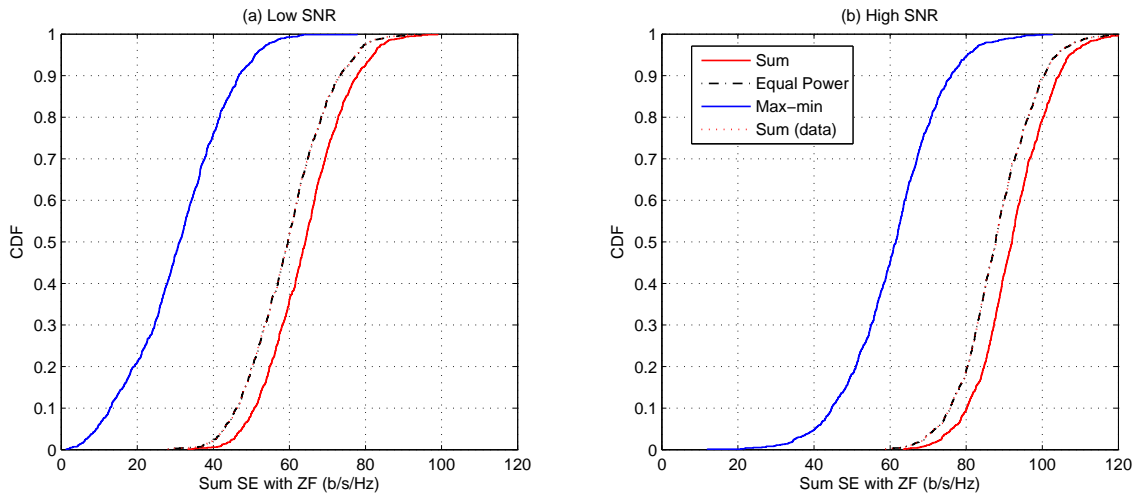}
\caption {\label{sumratezfplot}CDF of the sum SE with $M=100$, $K_0=10$, $T=200$, $R=1000$ m for ZF. Subplots (a) and (b) correspond to low SNR ($-5$ dB) and high SNR ($5$ dB) at the cell edge, respectively.}
\end{center}
\end{figure*}

In Figure \ref{sumratemrcplot} (a) and (b) we plot the CDF of the sum SE for the scenario we described above for MRC at low and high SNR respectively. We observe the optimized power control increases the sum SE significantly. The whole CDF is shifted to the right by almost $15$  bit/s/Hz in the low SNR scenario with the proposed power control as compared to equal power allocation. At low SNR the joint power control offers about $10\%$ increase over the case with only data power control. At high SNR the gain is marginal as the SEs of the users have saturated so we are in the $\log$ part of the SE already. The max-min scheme performs well at high SNR due to the saturation of SE, but worse at low SNR. This is because enforcing max-min fairness lead to large sacrifices in sum SE at low SNR. The reason is that with high probability there will be some very disadvantaged user, and everyone else has to cut back significantly to avoid causing near-far interference.

In Figure \ref{sumratezfplot} (a) and (b) we plot the CDF of the sum SE for ZF at low and high SNR respectively. We observe that with ZF when we optimize only the data power the optimal scheme is always using full power. The reason for this is that in single cell systems ZF removes most of the interference, the near-far effects are almost removed by the ZF detector thus creating almost parallel channels. Therefore the scheme with equal power allocation is the same as optimizing data power only. The joint power control offers about $10\%$ improvements over the case with only data power control at low SNR and the gain diminish as the SNR increases. However there will always be a gap between the two schemes, this is because even when the SNR tends to infinity we can use always save power on the pilot and use it for data which increases the SE. The max-min scheme performs poorly in both scenarios, this confirms our suggestion that with ZF we should use the sum SE formulation.

\subsection{Robustness}\label{robustness}

In this subsection, we present simulation results for the case when
the large scale fading parameters are not known perfectly, but
obtained through estimation. We assume that the BS collects $N$
processed pilots from each user to perform this estimation.
Specifically, denoting each channel realization by $\bg_k^i$, the
processed pilot signals received by the BS for each user can be
written as
\begin{equation}
\by_k^i=\sqrt{\tau_p p_k}\bg_k^i+\bw_k^i, i=1,\ldots,N,
\end{equation}
where $\by_k^i$ is the processed received signal, $\tau_p$ is the length
of the pilot, $p_k$ is the signal power and $\bw_k^i$ is additive
noise with variance $1$.
Then we estimate $\beta_k$ as follows:
\begin{equation}
\hat{\beta}_k=\frac{\sum_{i=1}^N ||\by_k^i||^2-MN}{MN\tau_p p_k}.
\end{equation}
This estimate is justified by the fact that
\begin{equation}
\begin{aligned}
||\by_k^i||^2 & \approx \tau_p p_k ||\bg_k^i||^2+||\bw_k^i||^2 \\
& \approx\tau_p p_k \beta_k M +M.
\end{aligned}
\end{equation}

Figure \ref{estimatedbeta} shows the minimum SE achieved by our
max-min scheme with the proposed estimator of the large-scale fading
parameters. The number of observations is $N=10$ and the median SNR at
the cell edge ranges from $-10$ dB to $10$ dB; all other simulation
parameters are the same as in the previous subsection. The estimated
$\beta$s are treated as the true $\beta$s in the optimization (marked
as 'Estimated'). The performance is compared with the case when the
$\beta$s are known perfectly (marked as 'Genie Aided'). We observe
that with the simple, above suboptimal estimator and the small number
of training symbols, the performance degradation is almost
negligible. We conclude that our scheme shows significant robustness
against estimation errors in the large-scale fading parameters.

\begin{figure}
\begin{center}
\includegraphics[width=0.5\textwidth]{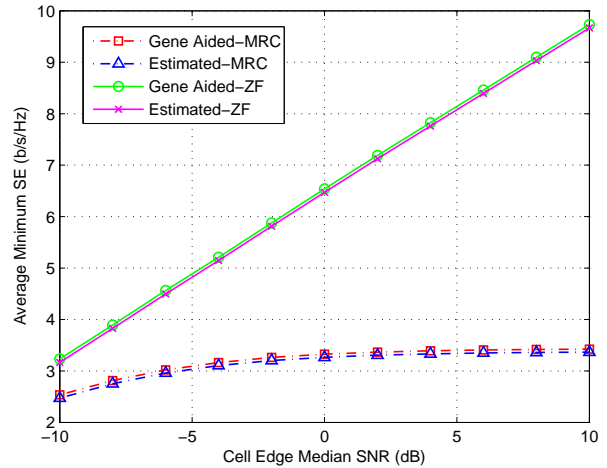}
\caption {\label{estimatedbeta} Average minimum SE with $M=100$, $K_0=10$, $T=200$, $N=10$, $R=1000$ m for estimated large scale fading parameters.}
\end{center}
\end{figure}

\subsection{Correlated Fading}

In this subsection we look at the performance of joint pilot and data
power control in correlated fading channels. We use the one-ring model
\cite{ANAG2013} to model the correlations. An angular spread of $10$
degrees is chosen, and the angles of arrival of different users are
independent and uniformly distributed between $0$ and $180$
degrees. The median cell edge SNR is $-10$ dB and all other parameters
are the same as in the above.

Figure \ref{correlatedsim} shows the CDF of the minimum SE achieved by
our scheme with element-wise MMSE channel estimation and MRC.  We
compare $4$ schemes: 1) the solution to problem \eqref{maxmincor2}
(marked as `Max-min'); 2) equal power allocation $p_d^k=p_p^k=E_{k}/T$
(marked as `Equal Power'); 3) optimization of only payload power for
problem \eqref{maxmincor2}, by fixing $p_p^k=E_{k}/T$ (marked as
`Max-min (data)'); 4) the solution to problem \eqref{maxmin} but with
application of the power control parameters obtained under the
i.i.d. assumption (marked as `Max-min i.i.d.'). From the plot we see
similar behaviors as in the i.i.d. channels, that is, joint pilot and
data power control improves the minimum SE substantially.  Directly
applying the power control parameters obtained under the
i.i.d. assumption, neglecting the correlation, yields surprisingly
good performance.

Taken together, joint pilot and data power control is highly useful in
the low SNR regime also for correlated fading channels. We expect this
conclusion to hold also for other channel models, which have to be
left for future work.

\begin{figure}
\begin{center}
\includegraphics[width=0.5\textwidth]{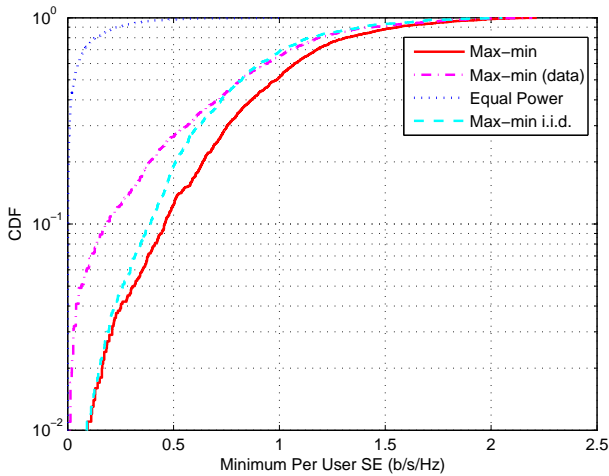}
\caption {\label{correlatedsim} CDF of the minimum SE with $M=100$, $K_0=10$, $T=200$, $R=1000$ m for MRC in correlated fading. The median SNR is $-10$ dB at the cell edge.}
\end{center}
\end{figure}

\subsection{Dependence on SNR, K and T}

In this subsection we study the dependence of the gain of joint
pilot and data power control on the SNR, number of users and length of
coherence interval.

First we investigate the performance of MRC. At low SNR, the noise
dominates over interference and we can approximate \eqref{sinrmrc} as
\begin{equation}
\mathrm{SINR}_k \approx Mp_d^k\gamma_k \approx MK p_p^kp_d^k\beta_k^2,
\end{equation}
for $\tau_p=K$.
Under the power constraint
\begin{equation}
Kp_p^k+(T-K)p_d^k \leq E_k,
\end{equation}
it is straightforward to show that
\begin{equation}
p_p^k=\frac{E_k}{2K} ~\text{and}~ p_d^k=\frac{E_k}{2(T-K)}
\end{equation}
maximize the approximate SINR. Since $T\gg K$, this means that the user
allocates substantially more power to pilots than to data at low
SNR. Compared to the case of data power control only, where
$p_p^k=p_d^k=E_k/T$, we have
\begin{equation}
\mathrm{SINR}_k^{opt}\approx \frac{T^2}{4K(T-K)} \mathrm{SINR}_k^{data},
\end{equation}
where $\mathrm{SINR}_k^{opt}$ represents the SINR  obtained by
optimizing the pilot and data power and $\mathrm{SINR}_k^{data}$
represents the SINR obtained by only optimizing the data power.

To conclude, the gain of joint pilot and data power control can be
substantial at low SNR, and when $K$ is small relative to $T$. For ZF,
similar results are obtained at low SNR, where the interference can be
neglected. Therefore our scheme may be particularly useful for
wireless broadband access with stationary terminals, as in that
application the coherence interval is usually very long.

At high SNR, when interference dominates over noise,
$\gamma_k\approx\beta_k$. Then the impact of the pilot power is
negligible for MRC. However for ZF, the interference is cancelled out
completely, thus creating parallel channels. More power will be spent
on data to boost the SE. However the gain will not be substantial as
the spectral efficiency only grows logarithmically with SINR in this
regime. Therefore optimizing the data power is most important.

\subsection{Complexity}

In this subsection, we characterize the computational complexity of
our schemes, and compare it to that of the other digital signal
processing that is carried out in massive MIMO systems (in particular,
channel estimation and linear detection of the data). We perform the
comparison for MRC, as ZF would consume more computational resources.

Since our power control parameters are computed based on the large
scale fading, we only have to recompute them at the pace that the
large-scale fading changes. The complexity of our algorithm for the
max-min problem is of order $O(K)$. The sum SE problem is transformed
to a convex problem that can be solved by a general interior point
method. Its complexity is $N_{it}O((m+K)^3)$, where $N_{it}$ is the
number of Newton iterations required to achieve a predetermined
precision, and $m$ is the number of constraints in the problem. The
exact number of $N_{it}$ is hard to determine, however in practice,
$N_{it}$ is typically in the order of tens \cite[Chapter
  11]{BV}. Therefore $100$ should be a good enough bound; in any case,
the algorithm may be terminated after $100$ iterations. In each Newton
iteration we are solving a linear system of equations. Since we have
$2K+1$ constraints and $K+1$ variables, the number of operations
required for solving this Newton system is about $9K^3$, assuming the
use of Cholesky factorization.  In the channel estimation phase the
number of operations is approximately $2MK^2$, and for MRC detection
the number of operations is approximately $2MK$ per data symbol
\cite{BSHD2015}. Therefore the total amount of computations in one
coherence interval is approximately $2MK^2+2MK(T-K)=2MKT$. The
measurements reported in \cite{VHU} show that the large-scale fading
parameters are constant over a duration that is on the order of 100
times the channel coherence time. Moreover, for the sake of argument,
we assume that there are 100 subcarriers in the system.  These
assumptions result in a relative computational overhead of the
proposed sum SE algorithm as
$\frac{N_{it}9K^3}{20000MKT+N_{it}9K^3}$. We see that even with
$N_{it}=100$ (likely an overestimate) this overhead is on the order of
$0.02\%$.

We conclude that while the complexity calculation given here
represents a first-order estimate only, the extra efforts for solving
the joint optimization problem is negligible in representative cases.

\color{black}\section{Conclusion}\label{conclusion}

We considered the optimal joint pilot and data power allocation
problems in single cell uplink massive MIMO systems with MRC or ZF
detection. It was first proved that the optimal length of the training
interval equals the number of users. Using the SE as performance
metric and setting a total energy budget, the power control was
formulated as optimization problems for two different objective
functions: the weighted minimum SE and the weighted sum SE. The
optimal power control policy was found for the case of maximizing the
weighted minimum SE by a semi-closed form solution to a single
variable equation with unique solution. The optimal power control
parameters were shown to be the same for MRC and ZF. For maximizing
the sum SE a convex reformulation was found and efficient solution
algorithms were developed. The methods have also been
  extended to handle the case of correlated fading, although a
  complete treatment of all aspects of that case is left for future
  work.

Simulation results demonstrated the advantage of joint optimization
over both pilot and data power, and how the two objectives behave at
low and high cell-edge SNRs. With MRC we have a clear choice to make
between max-min and sum SE, which is dependent on the system
requirements. With ZF we can maximize sum SE without sacrificing much
in min SE. The need of joint pilot and data power control is
particularly important at low SNR, while at high SNR optimizing only
data power seems to be good enough. Since multi-cell systems are
interference-limited, we predict that we will get results similar to
the low SNR results, particularly if a large pilot reuse factor is
used to get single-cell-like estimation quality. The
  numerical results were also justified by a theoretical analysis in
  the low and high SNR regime. This analysis showed that the gain is
  more substantial when the number of users, $K$, is small compared to
  the length of the coherence interval, $T$.

Future work includes extension of the methodologies to multi-cell
systems and more sophisticated system models.

\begin{appendices}
\subsection{Proof of Theorem \ref{theorem1}}\label{proofpilot}
Before proving Theorem \ref{theorem1} we state and prove two lemmas.
\begin{lemma}\label{lemma2}
For any $x\geq 0$, we have $\ln (x) \geq \frac{x-1}{x}$ with equality if and only if $x=1$.
\end{lemma}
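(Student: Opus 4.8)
The plan is to recognize this as (equivalent to) the classical inequality $\ln t \le t-1$ for $t>0$ and to prove it by an elementary one-variable argument. I first note that the statement is only meaningful for $x>0$, since $\ln x$ is undefined otherwise, so I take the domain to be $x>0$.

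The cleanest route is the substitution $t = 1/x$. Then $\frac{x-1}{x} = 1 - \frac{1}{x} = 1-t$ and $\ln x = -\ln t$, so the claim $\ln x \ge \frac{x-1}{x}$ is equivalent to $-\ln t \ge 1-t$, i.e. to $\ln t \le t-1$; moreover $x \mapsto 1/x$ is a bijection of $(0,\infty)$ onto itself fixing the point $1$, so ``equality iff $x=1$'' corresponds exactly to ``equality iff $t=1$''. To finish, I would set $g(t) = t - 1 - \ln t$ on $(0,\infty)$, compute $g'(t) = 1 - \tfrac{1}{t} = \tfrac{t-1}{t}$, and observe that $g$ is strictly decreasing on $(0,1)$ and strictly increasing on $(1,\infty)$; hence $g$ has a unique global minimum at $t=1$ with $g(1)=0$, which gives $g(t)\ge 0$ for all $t>0$ with equality only at $t=1$. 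Alternatively, since $\ln$ is strictly concave, the tangent line at $t=1$ lies above the graph, which is exactly $\ln t \le t-1$ with equality iff $t=1$; one could even avoid calculus entirely, but the monotonicity argument is the most self-contained.

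I do not anticipate any genuine obstacle here: this is a routine monotonicity (or concavity) argument. The only points requiring a little care are the implicit restriction to $x>0$ and the fact that the inequality direction flips when passing to the reciprocal, because negating both sides reverses the sense of $\ge$. Since the lemma is used in the proof of Theorem~\ref{theorem1} only through the SINR/rate expressions (which involve strictly positive quantities), restricting to $x>0$ costs nothing.
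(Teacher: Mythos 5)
Your proof is correct and, after the substitution $t=1/x$, it is essentially the paper's own argument: the paper directly analyzes $f(x)=\ln x-\frac{x-1}{x}$ via the sign of $f'(x)=\frac{1}{x}-\frac{1}{x^2}$, which is exactly your function $g(t)=t-1-\ln t$ in the reciprocal variable, with the same minimum-at-one conclusion. Your remark that the statement implicitly requires $x>0$ is a fair observation, but it changes nothing of substance.
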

\begin{proof}
Write $f(x)=\ln (x) - \frac{x-1}{x}$, then we have $f'(x)=\frac{1}{x}-\frac{1}{x^2}$. Observing that $f'(x)\leq 0, \forall x \in(0,1]$ and $f'(x)\geq0, \forall x>1$, we can conclude that $x=1$ is the minimum point of $f(x)$ at which $f(x)=0$. Thus we have $f(x)\geq 0, \forall x>0$, which proves the lemma.
\end{proof}
\begin{lemma}\label{lemma3}
For any positive constants $a,~b$ and $c$, $g(x)=x\log_2\left(1+\frac{a}{bx+c}\right)$ is a strictly monotonic increasing function in $x$ for all $x>0$.
\end{lemma}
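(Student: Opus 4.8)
The plan is to prove the claim by showing $g'(x)>0$ for every $x>0$. First I would rewrite $g(x)=\frac{x}{\ln 2}\,\ln\!\left(1+\frac{a}{bx+c}\right)$ and differentiate with the product rule. Using $\frac{d}{dx}\ln\!\left(1+\frac{a}{bx+c}\right)=\frac{b}{bx+c+a}-\frac{b}{bx+c}=-\frac{ab}{(bx+c)(bx+c+a)}$, this yields
\begin{equation}
g'(x)=\frac{1}{\ln 2}\left[\ln\!\left(1+\frac{a}{bx+c}\right)-\frac{abx}{(bx+c)(bx+c+a)}\right],
\end{equation}
so it suffices to show that the bracketed expression is strictly positive for all $x>0$.

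For that I would invoke Lemma~\ref{lemma2} with the argument $1+\frac{a}{bx+c}=\frac{bx+c+a}{bx+c}$, which is strictly larger than $1$ because $a>0$; hence the inequality of Lemma~\ref{lemma2} holds strictly and gives $\ln\!\left(1+\frac{a}{bx+c}\right)>\frac{a}{bx+c+a}$. It then remains to check that $\frac{a}{bx+c+a}\ge\frac{abx}{(bx+c)(bx+c+a)}$. Since the right-hand side equals $\frac{a}{bx+c+a}\cdot\frac{bx}{bx+c}$, this is equivalent to $\frac{bx}{bx+c}\le 1$, i.e.\ to $c\ge 0$, which holds by hypothesis. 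Chaining the two bounds gives $g'(x)>0$, and since $x>0$ was arbitrary, $g$ is strictly increasing on $(0,\infty)$.

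The computation is routine; the only thing that needs care is keeping track of the direction of the inequalities and verifying that strictness survives the chain. Because $a>0$ already makes the Lemma~\ref{lemma2} step strict, $g'$ is strictly positive, so no real obstacle arises — the argument is essentially a one-line derivative estimate once Lemma~\ref{lemma2} is available. (Note that merely observing $g$ is a product of the increasing function $x$ and the decreasing function $\log_2(1+a/(bx+c))$ is inconclusive by itself, which is why the derivative route is the natural one.)
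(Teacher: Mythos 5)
Your proof is correct and follows essentially the same route as the paper: compute $g'(x)=\frac{1}{\ln 2}\bigl[\ln\bigl(1+\frac{a}{bx+c}\bigr)-\frac{abx}{(bx+c)(bx+c+a)}\bigr]$ and bound it below using Lemma~\ref{lemma2} together with $\frac{bx}{bx+c}\le 1$. The only cosmetic difference is where you place the strict inequality (you take the Lemma~\ref{lemma2} step strict since its argument exceeds $1$, whereas the paper takes $bx/(bx+c)<1$ strict), which is immaterial.
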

\begin{proof}
Taking the first derivative we have
\begin{equation}
\begin{aligned}
g'(x)&=\frac{1}{\ln(2)}\ln\left(1+\frac{a}{bx+c}\right)\\
&+\frac{x}{\ln(2)}\cdot\frac{1}{1+a/(bx+c)}\cdot\frac{-a}{(bx+c)^2\cdot b}\\
&=\frac{1}{\ln(2)}\left(\ln\left(1+\frac{a}{bx+c}\right)-\frac{abx}{(bx+c)(bx+c+a)}\right)\\
&> \frac{1}{\ln(2)}\left(\ln\left(1+\frac{a}{bx+c}\right)-\frac{a}{bx+c+a}\right)\geq 0.
\end{aligned}
\end{equation}
The first inequality comes from the fact that $bx/(bx+c)<1$ for any strictly positive $b$ and $c$. The last inequality follows from putting $1+a/(bx+c)$ in Lemma \ref{lemma2}.
\end{proof}
Next, we prove Theorem \ref{theorem1} by contradiction. Assume that $\tau_p^*>K$, $p_p^{k*}$ and $p_d^{k*}$  is the optimal solution to problem \eqref{opt}. From Lemma \ref{lemma1} we know that
\begin{equation}
\tau_p^* p_{p}^{k*}+(T-\tau_p^*) p_d^{k*} = E_{k}, ~k=1,\ldots,K.
\end{equation} We will now construct a new feasible point that gives a higher objective function. Choose $\tau_p^{'}=K$, $p_p^{k'}=\tau_p^* p_p^{k*}/K$, $p_d^{k'}=(E_{k}-\tau_p^* p_p^{k*})/(T-K)$ for every user $k$, then $\gamma_k^{'}=\gamma_k^{*}$ as $\tau_p^{'}p_p^{k'}=\tau_p^* p_p^{k*}$. We compare the value of $R_k(\tau_p, p_p^k, p_d^k)$ for these two sets of parameters. The achievable SE for user $k$ with our new construction is
\begin{equation}
R_k(K,p_p^{k'},p_d^{k'})=\left(1-\frac{K}{T}\right)\log_2\left(1+\frac{a_k}{T-K+c_k}\right)
\end{equation}
where $a_k=M\gamma_k^{'}(E_{k}-\tau_p^* p_p^{k*})$, $c_k=\sum_{j=1}^K\beta_j(E_{j}-\tau_p^* p_p^{j*})$ for the MRC, and $a_k=(M-K)\gamma_k^{'}(E_{k}-\tau_p^* p_p^{k*})$, $c_k=\sum_{j=1}^K(\beta_j-\gamma_j^{'})(E_{j}-\tau_p^* p_p^{j*})$ for the ZF.
Then we observe that
\begin{equation}
R_k(\tau_p^*,p_p^{k*},p_d^{k*})=\left(1-\frac{\tau_p^*}{T}\right)\log_2\left(1+\frac{a_k}{T-\tau_p^*+c_k}\right).
\end{equation}
Next we apply Lemma \ref{lemma3} with $x=T-\tau_p$ we can know that $TR_k$ is a strictly monotonic increasing function in $T-\tau_p$. Therefore $R_k(K,p_p^{k'},p_d^{k'})>R_k(\tau_p^*,p_p^{k*},p_d^{k*})$ and the maximum is achieved at $\tau_p=K$ due to the constraint $\tau_p \geq K$. This is a contradiction to the assumption that $\tau_p^*>K$, hence $\tau_p^*=K$. Since this holds for every $k$, we have proved the theorem for any monotonic increasing function $U(R_1,\ldots,R_k)$.
\subsection{Proof of Theorem \ref{v_wf}}\label{proofvwf}
We first state the Lagrangian function of problem \eqref{sumratedata2}:
\begin{equation}
\begin{aligned}
&L(s,\{x_k\},\{\lambda_k\},\{\mu_k\},\nu)=\sum_k w_k \log_2(1+a_kx_k)\\
&-\sum_k\lambda_k(x_k-\beta_kP_ks)+\sum_k\mu_kx_k-\nu(\sum_kx_k+s-1).
\end{aligned}
\end{equation}
Then we can write the KKT conditions\cite{BV} for problem \eqref{sumratedata2}:
\begin{eqnarray}
\begin{aligned}
\frac{w_k}{\frac{1}{a_k}+x_k}-\lambda_k+\mu_k-\nu&=0, \forall k, \\
\lambda_k\geq0,~x_k \leq \beta_kP_ks, ~\lambda_k(x_k-\beta_kP_ks)&=0, \forall k, \\
\mu_k\geq0,~x_k \geq 0,~ \mu_kx_k&=0, \forall k, \\
\sum_kx_k&=1-s, \label{sumpower}\\
\sum_k\lambda_k\beta_kP_k-\nu&=0.\label{conditions}
\end{aligned}
\end{eqnarray}
We construct a set of solution to the above KKT conditions as follows:
\begin{align}
x_k&=\min \left(\beta_kP_ks,\left(\frac{w_k}{\nu}-\frac{1}{a_k}\right)^{+}\right), \forall k, \label{1}\\
\lambda_k&=\left(\frac{w_k}{\frac{1}{a_k}+x_k}-\nu\right)^{+}, \forall k, \label{2}\\
\mu_k&=(\nu-a_k)^{+}, \forall k. \label{3}
\end{align}
We can easily verify this set of solutions together with condition \eqref{sumpower} and \eqref{conditions} satisfies the overall KKT conditions. When $s$ is considered to be a constant, the last condition of \eqref{conditions} is not necessary as it corresponds to $\frac{\partial L}{\partial s}=0$. This set of solutions is a function of $\nu$ and we are looking for $\nu$ such that \eqref{sumpower} is satisfied. For a given $s$, finding the optimal $x_k$s and $\nu$ can be done using algorithms in \cite{PF2005} and \cite{HZZN2013}. Then we perform bisection on $s$ to find the optimal $s$ that satisfies \eqref{conditions}. Using bisection we are looking for the zero crossing point of a univariate function, and this requires the function to have different signs on each end of the interval. To justify that we can use bisection, we need to check the sign of $f(s)\triangleq \sum_{j=1}^K\beta_jP_j\left(\frac{w_j}{\frac{1}{a_j}+x_j}-\nu\right)^{+}-\nu$ on the boundaries, which corresponds to checking $s=\frac{1}{1+\sum_j\beta_jP_j}$ and $s=1$.

When $s=\frac{1}{1+\sum_j\beta_jP_j}$, then to satisfy \eqref{sumpower} $x_k=\beta_kP_ks$, and thus $\lambda_k\geq0, ~\forall k$ and $\mu_k=0, ~\forall k$. This is equivalent to
\begin{align}
\nu &\leq \frac{w_k}{\frac{1}{a_k}+\frac{\beta_kP_k}{1+\sum_j \beta_j P_j}}, ~\forall k.\label{v1}
\end{align}
On the other hand from the last condition of \eqref{conditions} we have
\begin{equation}
\nu=\frac{1}{1+\sum_j\beta_jP_j}\sum_{j=1}^K \beta_jP_j \left(\frac{w_j}{\frac{1}{a_j}+\frac{\beta_jP_j}{1+\sum_{j'} \beta_j' P_j'}}\right).\label{v2}
\end{equation}

Therefore for both \eqref{v1} and \eqref{v2} to hold the condition is
\begin{equation}\label{v3}
\frac{1}{1+\sum_j\beta_jP_j}\sum_{j=1}^K \beta_jP_j \left(\frac{w_j}{\frac{1}{a_j}+\frac{\beta_jP_j}{1+\sum_{j'} \beta_{j'} P_{j'}}}\right)\leq \min_k \frac{w_k}{\frac{1}{a_k}+x_k}.
\end{equation}
In such case we can always find the $\nu$ that satisfies the KKT conditions, which means that it is optimal to let every user use full power.
On the other hand if \eqref{v3} does not hold, there is no $\nu$ that can satisfy all conditions simultaneously, and we can easily check that $f(s)>0$.

When $s=1$, $\lambda_k=0, ~\forall k$ and therefore $f(s)=-\nu\leq0$.
Hence we have verified that $f(s)$ have different signs on the boundaries. Moreover \eqref{sumratedata2} is a convex problem and Slater's condition is satisfied. Therefore the KKT conditions are sufficient and necessary for optimality, there will be one $s$ such that $f(s)=0$ within the boundaries. The optimal $s$ can therefore be found via bisection.

\subsection{Proof of Lemma \ref{correlatedse}}\label{proofcorr}
We apply the bounding techniques used in \cite{JAMV2011} to develop an achievable SE as
\begin{equation}
\left(1-\frac{\tau}{T}\right)\log_2\left(1+ \mathrm{SINR}_k^{corr}\right),
\end{equation}
where
\begin{equation}
\mathrm{SINR}_k^{corr}=\frac{|E[\hat{\bg}_k^H\bg_k]|^2p_d^k}{\sum_j E[|\hat{\bg}_k^H\bg_j|^2]p_d^j-|E[\hat{\bg}_k^H\bg_k]|^2p_d^k+E[||\hat{\bg}_k||^2]}
\end{equation}
The results follow from calculating the terms in the above expressions, using standard results.
\color{black}
\end{appendices}

\bibliographystyle{IEEEtran}

\begin{thebibliography}{25}
\providecommand{\url}[1]{#1}
\bibitem{RPLLMET2013}
F.~Rusek, D.~Persson, B.~K. Lau, E.~G. Larsson, T.~L. Marzetta, O.~Edfors, and
  F.~Tufvesson, ``Scaling up MIMO: Opportunities and challenges with very large
  arrays,'' \emph{IEEE Signal Processing Magazine}, vol.~30, no.~1, pp. 40--60,
  Jan 2013.

\bibitem{TM2010}
T.~L. Marzetta, ``Noncooperative cellular wireless with unlimited numbers of
  base station antennas,'' \emph{IEEE Transactions on Wireless Communications},
  vol.~9, no.~11, pp. 3590--3600, November 2010.

\bibitem{LLSAZ2014}
L.~Lu, G.~Y. Li, A.~L. Swindlehurst, A.~Ashikhmin, and R.~Zhang, ``An overview
  of massive MIMO: Benefits and challenges,'' \emph{IEEE Journal of Selected
  Topics in Signal Processing}, vol.~8, no.~5, pp. 742--758, Oct 2014.

\bibitem{NLM2013}
H.~Q. Ngo, E.~G. Larsson, and T.~L. Marzetta, ``Energy and spectral efficiency
  of very large multiuser MIMO systems,'' \emph{IEEE Transactions on
  Communications}, vol.~61, no.~4, pp. 1436--1449, April 2013.

\bibitem{YM2013a}
H.~Yang and T.~L. Marzetta, ``Performance of conjugate and zero-forcing
  beamforming in large-scale antenna systems,'' \emph{IEEE Journal on Selected
  Areas in Communications}, vol.~31, no.~2, pp. 172--179, February 2013.

\bibitem{YM2013}
------, ``Capacity performance of multicell large-scale antenna systems,'' in
  \emph{2013 51st Annual Allerton Conference on Communication, Control, and
  Computing (Allerton)}, Oct 2013, pp. 668--675.

\bibitem{HBD2013}
J.~Hoydis, S.~ten Brink, and M.~Debbah, ``Massive MIMO in the UL/DL of cellular
  networks: How many antennas do we need?'' \emph{IEEE Journal on Selected
  Areas in Communications}, vol.~31, no.~2, pp. 160--171, February 2013.

\bibitem{YM2014}
H.~Yang and T.~L. Marzetta, ``A macro cellular wireless network with uniformly
  high user throughputs,'' in \emph{2014 IEEE 80th Vehicular Technology
  Conference (VTC2014-Fall)}, Sept 2014, pp. 1--5.

\bibitem{GFGFA2014}
K.~Guo, Y.~Guo, G.~Fodor, and G.~Ascheid, ``Uplink power control with mmse
  receiver in multi-cell mu-massive-MIMO systems,'' in \emph{2014 IEEE
  International Conference on Communications (ICC)}, June 2014, pp. 5184--5190.

\bibitem{NML2014}
H.~Q. Ngo, M.~Matthaiou, and E.~G. Larsson, ``Massive MIMO with optimal power
  and training duration allocation,'' \emph{IEEE Wireless Communications
  Letters}, vol.~3, no.~6, pp. 605--608, Dec 2014. (Correction published in vol. 4, pp. 225, Apr. 2015.)

\bibitem{C2014}
J.~Choi, ``Massive MIMO with joint power control,'' \emph{IEEE Wireless
  Communications Letters}, vol.~3, no.~4, pp. 329--332, Aug 2014.

\bibitem{CBL2015}
H.~V. Cheng, E.~Bj\"ornson, and E.~G. Larsson, ``Uplink pilot and data power
  control for single cell massive MIMO systems with mrc,'' in \emph{2015
  International Symposium on Wireless Communication Systems (ISWCS)}, Aug 2015,
  pp. 396--400.

\bibitem{LBL2015}
X.~Li, E.~Bj{\"{o}}rnson, E.~G. Larsson, S.~Zhou, and J.~Wang, ``Massive {MIMO}
  with multi-cell {MMSE} processing: Exploiting all pilots for interference
  suppression,'' \emph{CoRR}, vol. abs/1505.03682, 2015. [Online]. Available:
  \url{http://arxiv.org/abs/1505.03682}

\bibitem{GGA2015}
K.~Guo, Y.~Guo, and G.~Ascheid, ``Energy-efficient uplink power allocation in
  multi-cell mu-massive-MIMO systems,'' in \emph{Proceedings of European
  Wireless 2015; 21th European Wireless Conference}, May 2015, pp. 1--5.

\bibitem{ZZ2016}
Y.~Zhang and W.~P. Zhu, ``Energy-efficient pilot and data power allocation in
  massive MIMO communication systems based on MMSE channel estimation,'' in
  \emph{2016 IEEE International Conference on Acoustics, Speech and Signal
  Processing (ICASSP)}, March 2016, pp. 3571--3575.

\bibitem{LW2015}
S.~Lu and Z.~Wang, ``Joint optimization of power allocation and training
  duration for uplink multiuser MIMO communications,'' in \emph{2015 IEEE
  Wireless Communications and Networking Conference (WCNC)}, March 2015, pp.
  322--327. (extended version [Online] available: \url{http://arxiv.org/abs/1410.7675})

\bibitem{LZ2008}
Z.~Q. Luo and S.~Zhang, ``Dynamic spectrum management: Complexity and
  duality,'' \emph{IEEE Journal of Selected Topics in Signal Processing},
  vol.~2, no.~1, pp. 57--73, Feb 2008.

\bibitem{CTPNJ2007}
M.~Chiang, C.~W. Tan, D.~Palomar, D.~O'Neil, and D.~Julian, ``Power control by
  geometric programming,'' \emph{IEEE Transaction on Wireless Communications},
  vol.~6, no.~7, pp. 2640--2651, July 2007.

\bibitem{SSB2008}
S.~Shi, M.~Schubert, and H.~Boche, ``Rate optimization for multiuser MIMO
  systems with linear processing,'' \emph{IEEE Transactions on Signal
  Processing}, vol.~56, no.~8, pp. 4020--4030, Aug 2008.

\bibitem{BJ2013}
E.~Bj{\"o}rnson and E.~Jorswieck, ``Optimal resource allocation in coordinated
  multi-cell systems,'' \emph{Foundations and Trends in Communications and
  Information Theory}, vol.~9, no. 2-3, pp. 113--381, 2013.

\bibitem{GERT2015}
X.~Gao, O.~Edfors, F.~Rusek, and F.~Tufvesson, ``Massive MIMO performance
  evaluation based on measured propagation data,'' \emph{IEEE Transactions on
  Wireless Communications}, vol.~14, no.~7, pp. 3899--3911, July 2015.

\bibitem{BLD2015}
E.~Bj\"ornson, E.~G. Larsson, and M.~Debbah, ``Massive MIMO for maximal
  spectral efficiency: How many users and pilots should be allocated?''
  \emph{IEEE Transactions on Wireless Communications}, vol.~15, no.~2, pp.
  1293--1308, Feb 2016.

\bibitem{BJDO2014}
E.~Bj\"ornson, E.~A. Jorswieck, M.~Debbah, and B.~Ottersten, ``Multiobjective
  signal processing optimization: The way to balance conflicting metrics in 5g
  systems,'' \emph{IEEE Signal Processing Magazine}, vol.~31, no.~6, pp.
  14--23, Nov 2014.

\bibitem{BV}
S.~Boyd and L.~Vandenberghe, \emph{Convex optimization}.\hskip 1em plus 0.5em
  minus 0.4em\relax Cambridge, U.K.: Cambridge University Press, 2004.

\bibitem{MOSEK}
M.~ApS, \emph{MOSEK Optimization Suite Release 8.0.0.42}, 2016. [Online].
  Available: \url{http://docs.mosek.com/8.0/intro.pdf}

\bibitem{CVX}
M.~Grant and S.~Boyd. (2013, Sept) Cvx: Matlab software for disciplined convex
  programming, version 2.0 beta. [Online]. Available: \url{http://cvxr.com/cvx}

\bibitem{BG2006}
M.~Biguesh and A.~B. Gershman, ``Training-based MIMO channel estimation: a
  study of estimator tradeoffs and optimal training signals,'' \emph{IEEE
  Transactions on Signal Processing}, vol.~54, no.~3, pp. 884--893, March 2006.

\bibitem{ANAG2013}
A.~Adhikary, J.~Nam, J.~Y. Ahn, and G.~Caire, ``Joint spatial division and
  multiplexing - the large-scale array regime,'' \emph{IEEE Transactions on
  Information Theory}, vol.~59, no.~10, pp. 6441--6463, Oct 2013.

\bibitem{BSHD2015}
E.~Bj\"ornson, L.~Sanguinetti, J.~Hoydis, and M.~Debbah, ``Optimal design of
  energy-efficient multi-user MIMO systems: Is massive MIMO the answer?''
  \emph{IEEE Transactions on Wireless Communications}, vol.~14, no.~6, pp.
  3059--3075, June 2015.

\bibitem{VHU}
I.~Viering, H.~Hofstetter, and W.~Utschick, ``Spatial long-term variations in
  urban, rural and indoor environments,'' in \emph{Proceedings of the 5th
  COST}, vol. 273.\hskip 1em plus 0.5em minus 0.4em\relax Citeseer, 2002.

\bibitem{PF2005}
D.~P. Palomar and J.~R. Fonollosa, ``Practical algorithms for a family of
  waterfilling solutions,'' \emph{IEEE Transactions on Signal Processing},
  vol.~53, no.~2, pp. 686--695, Feb 2005.

\bibitem{HZZN2013}
P.~He, L.~Zhao, S.~Zhou, and Z.~Niu, ``Water-filling: A geometric approach and
  its application to solve generalized radio resource allocation problems,''
  \emph{IEEE Transactions on Wireless Communications}, vol.~12, no.~7, pp.
  3637--3647, July 2013.

\bibitem{JAMV2011}
J.~Jose, A.~Ashikhmin, T.~L. Marzetta, and S.~Vishwanath, ``Pilot contamination
  and precoding in multi-cell TDD systems,'' \emph{IEEE Transactions on
  Wireless Communications}, vol.~10, no.~8, pp. 2640--2651, August 2011.
\end{thebibliography}

\end{document}